\tikzstyle{every edge}=[draw,semithick]
\tikzset{initial text={}}
\newcommand{\newterm}[1]{\emph{#1}\index{#1}}
\newcommand{\NN}{\mathbb{N}}
\newcommand{\BB}{\mathbb{B}}
\newcommand{\FALSE}{\mathsf{false}}
\newcommand{\TRUE}{\mathsf{true}}
\DeclareMathOperator{\scope}{\mathbin{.}}
\newcommand{\LTLG}{\mathsf{G}}
\newcommand{\LTLF}{\mathsf{F}}
\newcommand{\Ass}{\mathcal{A}}
\newcommand{\Gua}{\mathcal{G}}
\newcommand{\Sys}{\langle T, \tau \rangle}
\newcommand{\Sysset}{\mathcal{S}}
\newcommand{\reach}{\left<E\right>}
\newcommand{\reachinf}{\LTLG\!\reach} 
\newcommand{\inputs}{I}
\newcommand{\inalph}{\mathcal{I}}
\newcommand{\outputs}{O}
\newcommand{\outalph}{\mathcal{O}}
\newcommand{\alp}{\Sigma}
\newcommand{\lang}{L}
\newcommand{\N}{\mathbb{N}}
\newcommand{\B}{\mathbb{B}}
\newtheorem{lemma}{Lemma}
\newtheorem{definition}{Definition}
\newtheorem{theorem}{Theorem}
\begin{document}

\newcommand{\extendedText}[2]{{#1}} 

\newcommand{\changedOne}[1]{{#1}}
\newcommand{\changedTwo}[1]{{#1}}

%
%
%
%
%
\title{Cooperative Reactive Synthesis%
%
\thanks{This work was supported in part by the Austrian Science Fund
  (FWF) through the research network RiSE (S11406-N23), by the
  European Commission through the projects STANCE (317753) and
  IMMORTAL (644905), and by the Institutional Strategy of the University of 
  Bremen, funded by the German Excellence Initiative.
  }
}
\def\titlerunning{Cooperative Reactive Synthesis}  
%
\author{Roderick Bloem
        \institute{IAIK, Graz University of\\ Technology, Graz, Austria}
\and
        R\"udiger Ehlers
        \institute{University of Bremen and\\ DFKI GmbH, Bremen, Germany}
\and
        Robert K\"onighofer
        \institute{IAIK, Graz University of\\ Technology, Graz, Austria}
}
\def\authorrunning{Roderick Bloem et al.} 
%
%

\maketitle              

\begin{abstract}
A modern approach to engineering correct-by-construction systems is to synthesize them automatically from formal specifications. 
Oftentimes, a system
can only satisfy its guarantees if certain environment assumptions hold, which motivates their inclusion in the system specification.
Experience with
modern synthesis approaches shows that 
synthesized systems tend to
satisfy their specifications by actively working towards the violation of the assumptions rather than satisfying 
assumptions and guarantees together.  Such uncooperative behavior is undesirable 
because it violates the aim of synthesis: the system should try to satisfy its guarantees and use the assumptions only when needed. Also, the assumptions often describe the valid behavior of other components in 
a bigger system, which should not be obstructed unnecessarily.

In this paper, we present a hierarchy of \emph{cooperation levels} between 
system and environment.  Each level describes how well the system enforces both 
the assumptions and guarantees.  We show how to synthesize systems that achieve 
the highest possible cooperation level for a given specification in Linear 
Temporal Logic (LTL).  The synthesized systems can also exploit cooperative 
environment behavior during operation to reach a higher cooperation level  
that is not enforceable by the system initially.  The worst-case time 
complexity of our synthesis procedure is doubly-exponential, which matches the 
complexity of standard LTL synthesis. 
\extendedText{
This is an extended version of~\cite{atva15} that features an additional appendix.
}{}

\end{abstract}

\section{Introduction}

When synthesizing reactive systems from their formal specifications, we 
typically start with a set of guarantees that the system should fulfill and
a set of assumptions about the environment. A synthesis tool then 
computes an implementation that satisfies the guarantees in all environments 
that satisfy the assumptions.
In many specifications, the system can 
influence whether the assumptions are satisfied; in particular, the system can actively force the environment to violate the 
assumptions.  The resulting implementation is correct: it fulfills its 
guarantees if the assumptions are fulfilled. However, it does so in an 
undesirable way.

Take for example a flight stabilization system that adds small forces to choices 
of forces issued by the pilot. The system guarantees state that there is little 
jitter in the absolute forces applied. This can only work for ``well-behaved'' 
evolutions of the manually selected force values, which in turn depend on 
the forces currently applied. Without the well-behavedness assumption, the system has no way 
to stabilize the flight. However, if the system has the 
opportunity to add forces that make the pilot violate this assumption, it can 
do so without violating its overall specification (namely that the guarantees 
must hold whenever the assumptions hold). Clearly, such 
behavior is not covered by the specifier's intent.

This observation leads to the question of how we can synthesize systems that 
cooperate with the environment whenever possible. Our main idea is that, at any 
point in time, the system should try to be as cooperative as possible while 
still ensuring correctness. Depending on the concrete situation, several levels 
of cooperation may be possible. Some examples are: 
\begin{enumerate}
\item The system can enforce both the assumptions and guarantees to hold.
\item The system can enforce the guarantees if the assumptions hold and 
the system can give the environment the opportunity to satisfy the assumptions at the same time. 
\item The system can neither enforce the assumptions nor the guarantees, 
but environment and system together can satisfy both. 
\end{enumerate}
The first of these levels is most beneficial: there is no need to rely on 
the environment, not even for satisfying the assumptions.  On the second level 
we can satisfy the correctness objective even without allowing the system to 
enforce an assumption violation.  From states of the third level, the system 
cannot enforce correctness.  However, instead of resigning and behaving 
arbitrarily, the system still offers some executions along which both the 
assumptions and guarantees are fulfilled, thereby optimistically assuming that 
the environment is helpful. 

In this paper, we perform a rigorous analysis of cooperation between the system 
and its environment in the setting of reactive synthesis.
We generalize the three levels of cooperation from above 
to a fine-grained \newterm{cooperation hierarchy} that allows us to reason 
about how cooperative a controller for a given specification can be. A level in 
our hierarchy describes what objectives (in terms of assumptions and 
guarantees) the controller can achieve on its own and for what objectives it has 
to rely on the environment.  

As a second contribution, we present a synthesis procedure to construct a 
controller that always picks the highest possible cooperation level in our 
hierarchy for a given linear-time temporal logic (LTL) specification.  The 
synthesized controllers do not only enforce the highest possible 
level statically, but also exploit environment behavior that enables reaching a higher level that cannot be enforced initially.  Thus, implementations 
synthesized with our approach are \newterm{maximally cooperative}, without any need to 
declare cooperation in the specification explicitly by enumerating scenarios in which a system can operate in a cooperative way. 
Our \newterm{maximally cooperative synthesis procedure} takes at most doubly-exponential time, which matches the complexity of standard LTL synthesis. 
Our techniques are applicable to all $\omega$-regular word languages.  For 
specifications given as deterministic Rabin word automata, the complexity is 
polynomial in the number of states and exponential in the number of \changedOne{acceptance condition}
pairs.

\textbf{Outline.} The presentation of this paper is structured as follows.  
Section~\ref{sec:rel} reviews related work and Section~\ref{sec:prel} introduces 
background and notation. Section~\ref{sec:hie} then presents our hierarchy of 
cooperation levels, while Section~\ref{sec:synt} describes our synthesis approach for 
obtaining cooperative systems. We conclude in Section~\ref{sec:concl}. 
\extendedText{
}{
An extended version~\cite{extended} of this paper contains extensions and more  examples.
}

\section{Related Work}\label{sec:rel}

In earlier work~\cite{DBLP:journals/corr/BloemEJK14}, we already pointed out 
that existing synthesis approaches handle environment assumptions 
in an unsatisfactory way.  We postulated that synthesized systems should (1) \emph{be correct}, (2) \emph{not be lazy} by 
satisfying guarantees even if assumptions are violated, (3) \emph{never give up} 
by working towards the satisfaction of the guarantees even if this is not possible in the 
worst case, and (4) \emph{cooperate} by helping the environment to satisfy 
the assumptions that we made about it~\cite{DBLP:journals/corr/BloemEJK14}.  Our current work addresses all of these challenges.  

Besides correctness, our main focus is on cooperation, which is also addressed 
by Assume-Guarantee Synthesis~\cite{DBLP:conf/tacas/ChatterjeeH07} and synthesis 
under rationality assumptions~\cite{DBLP:conf/tacas/FismanKL10, 
DBLP:conf/vmcai/Chatterjee0FR14, DBLP:conf/stacs/Berwanger07, 
DBLP:conf/csl/BrenguierRS14}.  These approaches do not distinguish 
between system and environment, but synthesize implementations for both.
Each component works under certain assumptions about the
other components not deviating from their synthesized implementations 
arbitrarily.  In contrast, our work handles the system and its environment 
asymmetrically: we prefer the guarantees over the assumptions and prioritize
correctness over supporting the environment.

Similar to existing work on synthesis of 
robust~\cite{DBLP:journals/acta/BloemCGHHJKK14} and 
error-resilient~\cite{DBLP:conf/hybrid/EhlersT14} systems, our synthesis 
approach is also \emph{not lazy}~\cite{DBLP:journals/corr/BloemEJK14} in 
satisfying guarantees.  The reason is that satisfying guarantees is more 
preferable in our cooperation hierarchy than satisfying guarantees only if the 
assumptions are satisfied.  In contrast to the existing work, we do not only 
consider the worst case environment behavior for satisfying guarantees (even if 
assumptions are violated), but also the scenario where guarantees can only be 
satisfied with the help of the environment.  

Like~\cite{DBLP:conf/mfcs/Faella09}, we also address the 
\emph{never give up} challenge~\cite{DBLP:journals/corr/BloemEJK14} because our 
cooperation hierarchy also includes levels on which the system cannot enforce 
the guarantees any more.  Still, our synthesis approach \changedOne{lets the system} satisfy the guarantees for \emph{some} environment behavior whenever that is possible.

In contrast to quantitative synthesis~\cite{DBLP:conf/cav/BloemCHJ09, DBLP:conf/icalp/AlmagorBK13} 
where the synthesized system maximizes a certain pay-off, our approach is purely 
qualitative: a certain level of the hierarchy is either achieved or not. In this way, we not only avoid the computational blow-up induced by operating with numerical data, but also remove the need to assign meaningful quantities (such as probabilities) to the specifications.

Finally, \cite{iros15} presents an extension of a synthesis algorithm for 
so-called GR(1) specifications~\cite{DBLP:journals/jcss/BloemJPPS12} to 
produce mission plans for robots that always offer some execution on which both 
assumptions and guarantees are satisfied.  Hence, \cite{iros15} gives a 
solution for one level in our hierarchy, implemented for a subset of LTL.

\section{Preliminaries}\label{sec:prel}

We denote the Boolean domain by $\B=\{\FALSE,\TRUE\}$ and the set of natural 
numbers (including $0$) by $\N$.

\smallskip
\noindent
\textit{Words:} 
We consider synthesis of reactive systems with a finite set 
$\inputs=\{i_1,\ldots,i_m\}$ of Boolean input signals and a finite set 
$\outputs=\{o_1,\ldots,o_n\}$ of Boolean outputs.  The input alphabet is 
$\inalph=2^\inputs$, the output alphabet is $\outalph=2^O$, and $\alp=\inalph 
\times \outalph$.
The set of finite (infinite) words over $\alp$ is denoted by 
$\alp^*$ ($\alp^\omega$), where $\epsilon$ is the word of length $0$.
A set $\lang \subseteq  \alp^\omega$ of infinite words is called a 
(\newterm{word}) \newterm{language}.

\smallskip
\noindent
\textit{Specifications:}
A specification $\varphi$ over $\alp$ defines a language $\lang(\varphi)$ of 
allowed words.  In this paper, specifications consist of two parts, the 
\newterm{environment assumptions} $\Ass$ and the \newterm{system guarantees} 
$\Gua$. These parts can be combined by logical operators. 
For example, we write $\Ass \rightarrow 
\Gua$ to specify that the guarantees (only) need to hold 
if the assumptions are satisfied.
We say that some finite word $w \in \Sigma^*$ is a \newterm{bad prefix} for $\varphi$ if there does not exist a word $w' \in \Sigma^\omega$ such that $ww' \in L(\varphi)$. \changedOne{The set of infinite words that have no bad prefixes of $\varphi$ will be called the \newterm{safety hull} of $\varphi$.}

\smallskip
\noindent
\textit{Reactive systems:}
A reactive system interacts with its environment in a synchronous way. 
In every time step $j$, the system first provides an output letter~$y_j \in \outalph$, 
after which the environment responds with an input letter $x_j \in \inalph$. This 
is repeated in an infinite execution to produce the \newterm{trace} $w = (x_0, 
y_0) (x_1, y_1) \ldots \in \alp^\omega$.  

We can represent the complete behavior of a reactive system by a \newterm{computation tree} $\langle T, \tau \rangle$, where $T$ is the set of nodes and a subset of $\inalph^*$, and $\tau$ assigns labels to the tree nodes. In \changedOne{such a tree}, we have $\tau : T \rightarrow \outalph$, i.e., the tree nodes are labeled by the last output of the system. We call trees $\langle T, \tau \rangle$ with $T = \mathcal{I}^*$ \newterm{full trees} and consider only these henceforth, unless otherwise stated.
We say that some trace $w = (x_0, y_0) (x_1, y_1) \ldots \in \alp^\omega$ is \newterm{included} in $\langle T, \tau \rangle$ if for all $i \in \NN$, $y_i = \tau(x_0 \ldots x_{i-1})$. In such a case, we also say that $w$ is a \newterm{branch} of $\langle T, \tau \rangle$.

We say that some specification 
$\varphi$ is \newterm{realizable} if there exists a computation tree for the system all of whose traces are in $L(\varphi)$. The set of 
traces of $\langle T, \tau \rangle$ is denoted by $\lang(\langle T, \tau \rangle)$ and called the \newterm{word language} of 
$\langle T, \tau \rangle$. The set of all computation trees over $\inalph$ and $\outalph$ is denoted by $\Sysset$. Since a computation tree represents the full behavior of a reactive system, we use these two terms interchangeably.

\smallskip
\noindent
\textit{Rabin word and tree automata:}
In order to represent specifications over $\Sigma$ in a finitary way, we employ \newterm{Rabin automata}. For word languages, we use deterministic Rabin word automata, which are defined as tuples $\changedTwo{\mathcal{R}} = (Q,\Sigma,\delta,q_0,\mathcal{F})$, where $Q$ is the finite set of \newterm{states}, $\Sigma$ is the finite alphabet, $\delta : Q \times \Sigma \rightarrow Q$ is the \newterm{transition function}, $q_0 \in Q$ is the initial state of $\changedTwo{\mathcal{R}}$, and $\mathcal{F} \subseteq 2^Q \times 2^Q$ is the \newterm{acceptance} condition of $\changedTwo{\mathcal{R}}$.
Given a word $w \in \Sigma^\omega$, $\changedTwo{\mathcal{R}}$ induces a \newterm{run} $\pi = \pi_0 \pi_1 \pi_2 \ldots \in Q^\omega$, where $\pi_0 = q_0$ and for every $i \in \NN$, we have $\pi_{i+1} = \delta(\pi_i,w_i)$. 
The run $\pi$ is \newterm{accepting} if there exists some \newterm{acceptance condition pair} $(F,G) \in \mathcal{F}$ such that $\inf(\pi) \cap F = \emptyset$, and $\inf(\pi) \cap G \neq \emptyset$, where $\inf$ maps a sequence to the set of elements that appear infinitely often in the sequence.
We say that $w$ is accepted by $\changedTwo{\mathcal{R}}$ if there exists a run for $w$ that is accepting. The language of $\changedTwo{\mathcal{R}}$, denoted as $L(\changedTwo{\mathcal{R}})$, is defined to be the set of words accepted by $\changedTwo{\mathcal{R}}$. 
The language of a state $q \in Q$ is defined to be the language of the automaton $\changedTwo{\mathcal{R}}_q = (Q,\Sigma,\delta,q,\mathcal{F})$, which differs from $\changedTwo{\mathcal{R}}$ only by its initial state.
Without loss of generality, we assume that every Rabin automaton has a designated state $\top$ that has the full language, i.e., for which $L(\changedTwo{\mathcal{R}}_\top) = \Sigma^\omega$.

In addition to deterministic Rabin word automata, we will later also need non-deterministic Rabin tree automata. A \changedOne{non-deterministic} Rabin tree automaton $\changedTwo{\mathcal{R}} = (Q,\inalph, \allowbreak \outalph, \allowbreak \delta, \allowbreak q_0,\mathcal{F})$ is defined similarly to a word automaton, except that $\delta$ has a different structure, and we now have two alphabets, namely a \newterm{branching alphabet} $\inalph$ and a \newterm{label alphabet} $\outalph$. The transition function $\delta$ is defined as $\delta : Q \times \outalph \rightarrow 2^{\inalph \rightarrow Q}$.
A tree automaton accepts or rejects computation trees instead of words. 
Given a computation tree $\langle T, \tau \rangle$, we say that some tree $\langle T', \tau' \rangle$ is a \newterm{run tree} of $\changedTwo{\mathcal{R}}$ for $\langle T, \tau \rangle$ if $T' = T$, $\tau' : T' \rightarrow Q$, $\tau'(\epsilon)=q_0$, and for all $t' \in T'$, there exists a function $f \in \delta(\tau'(t'),\tau(t'))$ such that for all $i \in \mathcal{I}$, $f(i) = \tau'(t' i)$. We say that $\langle T', \tau' \rangle$ is \newterm{accepting} if every branch of $\langle T', \tau' \rangle$ is accepting, i.e., its sequence of labellings satisfies the Rabin acceptance condition $\mathcal{F}$.
We say that a tree automaton accepts a computation tree if it has a corresponding accepting run tree.

Given a Rabin word automaton $\changedTwo{\mathcal{R}}$ as specification over the alphabet $\Sigma = \mathcal{I} \times \mathcal{O}$, we can check if there exists a tree with branching alphabet $\mathcal{I}$ and label alphabet $\mathcal{O}$ all of whose traces are in the language of $\changedTwo{\mathcal{R}}$.
It has been shown that this operation can be performed in time exponential in $|\mathcal{F}|$ and polynomial in $|Q|$ by first translating the Rabin word automaton to a non-deterministic Rabin tree automaton with the same set of states and the same acceptance condition (in linear time), and then checking the tree automaton's language for emptiness \cite{DBLP:conf/popl/PnueliR89}.
This approach gives rise to a reactive synthesis procedure for specifications in \newterm{linear temporal logic} (LTL) \cite{DBLP:conf/popl/PnueliR89}: we can first translate the LTL specification to a deterministic Rabin word automaton with a number of states that is doubly-exponential in the length of the specification, and a number of acceptance condition pairs that is exponential in the length of the specification.
Overall, this gives a doubly-exponential time procedure for LTL reactive synthesis, which matches the known complexity of the problem.



\section{A Hierarchy of Cooperation Levels}\label{sec:hie}

In this section, we develop a hierarchy of cooperation levels that a system may 
achieve. 
We use a special logic to specify desired properties of the system to be 
synthesized.  The syntax of a formula $\Phi$ in this logic is defined as
\begin{equation}
\Phi ::= \varphi \mid 
           \reach \varphi \mid
           \reachinf \varphi \mid
           \Phi \wedge \Phi,
\label{eqn:cooperationLevelGrammar}
\end{equation}
where $\varphi$ is a linear-time specification (e.g., an LTL formula) over an alphabet $\alp$.  The Boolean connective $\wedge$ 
has its expected semantics. The formula $\reach \varphi$ is satisfied for a 
system $\Sys$ if there exists some trace of $\Sys$ on which $\varphi$ holds. 
Similarly, the formula $\reachinf \varphi$ is satisfied if, from any 
point in a reactive system's trace that has been seen so far, there exists some suffix trace of the system
such that $\varphi$ holds. 
We call an instance of the grammar in Formula~\ref{eqn:cooperationLevelGrammar} a \newterm{cooperation level specification}.
Our logic ranges over computation trees and has similarities to \newterm{strategy logic}~\cite{DBLP:journals/iandc/ChatterjeeHP10} as well as 
\emph{alternating-time temporal logic} (ATL)~\cite{DBLP:journals/jacm/AlurHK02}. 
Yet, its semantics, to be given below, is very different. In particular, linear-time specifications $\varphi$ are seen as atomic and even within the scope of a $\LTLG$ operator, $\varphi$ is only  evaluated on \emph{complete traces} of a system, always starting \changedOne{at} the root of a computation tree. 

More formally, we define the semantics as follows.  Formulas are interpreted 
over a computation tree $ \langle T, \tau \rangle \in\Sysset$ using the
following rules:
{\allowdisplaybreaks
\label{pageref:cooperationLevelSemantics}
\begin{align*}
&\langle T, \tau \rangle  &\models& \varphi &\;\;\text{ iff }\;\;&
   \forall w \in L(\langle T, \tau \rangle): w \in \lang(\varphi)\\
&\langle T, \tau \rangle  &\models& \reach \varphi &\;\;\text{ iff }\;\;& 
   \exists w \in L(\langle T, \tau \rangle): w\in \lang(\varphi)\\
&\langle T, \tau \rangle  &\models& \reachinf \varphi &\;\;\text{ iff }\;\;& 
   \forall w \in L(\langle T, \tau \rangle ), i \in \NN \scope
   \exists w' \in L(\langle T, \tau \rangle ) \cap L(\varphi)\scope \\
   & & & & & w_0 \ldots w_{i-1} = w'_{0} \ldots w'_{i-1} \\
& \langle T, \tau \rangle  &\models& \Phi_1 \wedge \Phi_2 &\;\;\text{ iff }\;\;&  
   (\langle T, \tau \rangle  \models \Phi_1) \wedge (\langle T, \tau \rangle  \models \Phi_2)
\end{align*}
}

\noindent
Most cases are straightforward, but $\reachinf \varphi$ requires some attention.  
We follow an arbitrary trace of $\langle T, \tau \rangle$ up to step $i-1$. At step 
$i-1$, there must be some continuation of the trace that satisfies $\varphi$ and that is part of the system's computation tree. If such a continuation 
exists in every step $i$, then $\reachinf \varphi$ is satisfied. 

\subsection{Defining the Interesting Levels of Cooperation} \label{sec:defcop}

We start with the linear-time specification $\mathcal{A}$, which represents the assumptions about the environment, and $\mathcal{G}$, which represents the guarantees. 
We can combine them to the linear-time properties $\mathcal{A} \rightarrow \mathcal{G}$ and $\mathcal{A} \wedge \mathcal{G}$. 
The first of these represents the classical correctness requirement for reactive synthesis, while the latter represents the optimistic linear-time property that both assumptions and guarantees hold along a trace of the system.
As generators for cooperation level specifications, we consider all rules from the grammar in Equation~\ref{eqn:cooperationLevelGrammar} except for the second one, as it is rather weak, and only considers what can happen from the initial state of a system onwards.
Additionally, leaving out the conjuncts of the form $\langle E \rangle \varphi$ strengthens the semantic foundation for our \newterm{maximally cooperative synthesis} approach in Sect.~\ref{sec:optimisticLevelJumping}. However, we discuss the consideration of conjuncts of the form $\langle E \rangle \varphi$ in Section~\ref{sec:alt}.

Combining all four linear-time properties with the two chosen ways of lifting a linear-time property to our logic gives the following different conjuncts for cooperation level specifications:
\begin{equation}
\label{eqn:cooperationLevelConjunctsConsidered}
D = \{\Ass \rightarrow \Gua, \Gua, \Ass,  \reachinf(\Ass \wedge \Gua), 
  \reachinf\Gua, \reachinf\Ass, \reachinf(\Ass \rightarrow \Gua)\}
\end{equation}
A \newterm{cooperation level specification} is a conjunction between elements from this set. So there are $2^7 = 128$ possible cooperation levels in this setting. Note that we removed the linear-time property $\mathcal{A} \wedge \mathcal{G}$ from $D$, as it can be simulated by a conjunction between $\mathcal{A}$ and $\mathcal{G}$ on the level of cooperation level specifications.

We can reduce the $128$ possible cooperation levels substantially by using our knowledge of the semantics of cooperation level specifications, which can be expressed in \newterm{reduction rules}.
For example, if in a cooperation level specification, for some linear-time property $\varphi \in \{\mathcal{A}, \mathcal{G}, \mathcal{A} \rightarrow \mathcal{G}, \mathcal{A} \wedge \mathcal{G}\}$, $\varphi$ is a conjunct along with $\LTLG \langle E \rangle \varphi$, then we can remove the latter from the cooperation level specification as that conjunct is implied by the former. This is because if along every trace of a computation tree, $\varphi$ holds, then for every node in the tree, we can find a trace containing the node along with $\varphi$ holds. In a similar fashion, we can observe  
\begin{itemize}
\item that $\mathcal{G}$ implies $\mathcal{A} \rightarrow \mathcal{G}$, 
\item that $\reachinf(\mathcal{A} \wedge \mathcal{G})$ implies $\reachinf\mathcal{A}$ and $\reachinf\mathcal{G}$,
\item that $\reachinf\mathcal{G}$ implies $\reachinf{(\mathcal{A} \rightarrow \mathcal{G})}$,
\item that $\mathcal{A} \rightarrow \mathcal{G}$ and $\mathcal{A}$ together imply $\mathcal{G}$, and
\item that $\mathcal{A} \rightarrow \mathcal{G}$ and $\reachinf \mathcal{A}$ together imply $\reachinf(\mathcal{A} \wedge \mathcal{G})$,
\item that $\mathcal{A}$ and $\reachinf \mathcal{G}$ together imply $\reachinf(\mathcal{A} \wedge \mathcal{G})$,
\item that $\reachinf(\mathcal{A} \rightarrow \mathcal{G})$ and $\mathcal{A}$ together imply $\reachinf(\mathcal{G})$.
\end{itemize}
Only in the last \changedOne{three} of these \newterm{rules}, conjuncts of the forms $\reachinf\varphi$ and $\varphi$ interact. For example, if $\mathcal{A} \rightarrow \mathcal{G}$ holds along all traces of a computation tree, and we know that through every node in the tree, there is a trace on which $\mathcal{A}$ holds, then along this trace, $\mathcal{A} \wedge \mathcal{G}$ holds as well. Thus, we also know that the computation tree satisfies $\reachinf(\mathcal{A} \wedge \mathcal{G})$. Note that $\reachinf(\Ass \wedge \Gua)$ is not 
equal to $\reachinf(\Gua) \wedge \reachinf(\Ass)$ because there exist computation trees for which a part of their traces satisfy $\Ass\wedge\neg\Gua$, the other traces satisfy $\Gua\wedge\neg\Ass$, but none of their traces satisfy $\Ass\wedge\Gua$. 

\begin{figure}[tb]
\centering
\begin{tikzpicture}[auto]
\node[rectangle,draw,fill=gray!30] at (0,0)    (S0) {$\Gua \wedge \Ass$};
\node[rectangle,draw]              at (-2,-1)  (S11){$\Ass \wedge \reachinf \Gua$};
\node[rectangle,draw,fill=gray!30] at (2,-1)   (S12){$\Gua \wedge \reachinf \Ass$};

\node[rectangle,draw]              at (-3.5,-2)(S21){$\Ass$};
\node[rectangle,draw,fill=gray!30] at (2,-2)   (S22){$(\Ass \rightarrow \Gua) \wedge \reachinf \Ass$};
\node[rectangle,draw,fill=gray!30] at (4.5,-2) (S23){$\Gua$};

\node[rectangle,draw]              at (0,-3)   (S31){$\reachinf(\Ass \wedge \Gua)$};
\node[rectangle,draw,fill=gray!30] at (4.5,-3) (S32){$(\Ass \rightarrow \Gua) \wedge \reachinf \Gua$};

\node[rectangle,draw]              at (0,-4)   (S4) {$(\reachinf\Ass) \wedge (\reachinf\Gua)$};

\node[rectangle,draw]              at (-1,-5)  (S5) {$(\reachinf \Ass) \wedge \reachinf(\Ass \rightarrow \Gua)$};

\node[rectangle,draw]              at (-3.5,-6)(S61){$\reachinf \Ass $};
\node[rectangle,draw]              at (4.5,-6) (S62){$\reachinf \Gua $};
\node[rectangle,draw,fill=gray!30] at (6.5,-6) (S63){$\Ass\rightarrow\Gua$};
\node[rectangle,draw]              at (3.5,-7) (S7) {$\reachinf(\Ass \rightarrow \Gua)$};

\path
(S0) edge (S11)
(S0) edge (S12)
(S11) edge (S21)
(S12) edge (S22)
(S12) edge (S23)
(S11) edge (S31)
(S22) edge (S31)
(S22) edge (S32)
(S23) edge (S32)
(S31) edge (S4)
(S4) edge (S5)
(S5) edge (S61)
(S21) edge (S61)
(S32) edge (S62)
(S32) edge (S63)
(S4) edge (S62)
(S5) edge (S7)
(S62) edge (S7)
(S63) edge (S7)
;
\end{tikzpicture}
\caption{A Hasse diagram of the hierarchy of cooperation levels.}
\label{fig:hierarchy}
\end{figure}
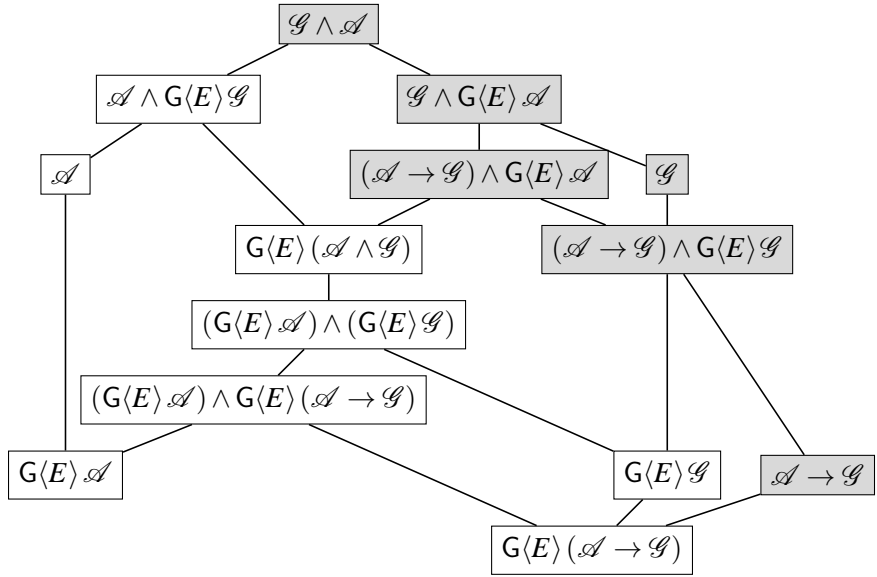

After reducing the set of distinct cooperation levels by these reduction rules, we obtain only $15$ semantically different subsets of $D$, which form a lattice with its partial order defined by implication.
We leave out the $\TRUE$ element of the lattice in the following (as it is trivially satisfied by all computation trees), and visualize the remaining cooperation levels hierarchically in Fig.~\ref{fig:hierarchy}. For every cooperation level, implied elements of $D$ have also been left out in the vertex labeling. 
Every edge in the 
figure denotes a ``stricter than''-relation, with the upper element being the stricter one.
Vertices that enforce the traditional correctness criterion $\Ass 
\rightarrow \Gua$ are colored in gray.

In the next section, we discuss the different levels of the cooperation 
hierarchy in an example.  After that, we will discuss design decisions that we made when constructing the hierarchy, as well as extensions and alternatives.

\subsection{Example} \label{sec:ex}

This section presents an example from real life to illustrate some 
cooperation levels from Fig.~\ref{fig:hierarchy}.  
\extendedText{
A more extensive example will be presented in Appendix~\ref{sec:app:ex}.
}{
A more extensive example is given in the extended version of this paper~\cite{atva15}.
}

An IT provider makes a contract with a customer.
The provider promises that whenever the computer breaks, it will eventually
deliver a new computer, or one of its technicians will come to the customer to fix the computer while providing some free printing supplies as a bonus.
The customer accepts the conditions, mainly because of the possibility of free printing supplies. 
The contract between the customer and the IT provider contains the following assumptions under which it has to be fullfilled: (1) every traffic blockade on the way to the \changedTwo{customer} is eventually cleared, and (2) if the road to the customer is clear and the IT provider technician is working on the computer, then it will eventually be fixed. The latter assumption is part of the contract to exclude problems with the customer's software, which is not the responsibility of the IT provider. By simply replacing parts of the computer one-by-one, the problem \changedTwo{is} eventually fixed. We also have an additional assumption that traffic can be blocked on purpose by the IT provider, which is not written into the contract.

We model the setting with two Boolean input variables $c,b$ and four Boolean output variables
$t,d,m,n$ for the IT service provider's behavior.  Variable $c$ is $\TRUE$ if the computer is currently working, $b$ is 
$\TRUE$ when the traffic is not blocked, $f=\TRUE$ indicates that the IT provider's technician is trying to fix 
the computer, $m=\TRUE$ means that free printing supplies are delivered, $n=\TRUE$ means 
that the IT service provider delivers a new computer, and $d=\TRUE$ means that the provider currently blocks the traffic on purpose. In LTL syntax\footnote{In LTL syntax, $\LTLG$ means ``always'' and $\LTLF$ means 
``eventually''.}, the guarantee can be written as 
$\Gua = \LTLG\bigl(\neg c \rightarrow \LTLF(n \vee (c \wedge m))\bigr)$. 
The assumptions can be formalized as 
$
\Ass =
\LTLG(\LTLF b ) \wedge 
\LTLG\bigl((b \wedge f) \rightarrow \LTLF c\bigr) \wedge
\LTLG(d \rightarrow \neg b)
$.

The following table summarizes some cooperation levels that can be achieved 
with different behavior of the IT provider, each expressed as an LTL property whose fulfillment completely determines the valuation of the output variables in all time steps.
We focus on levels that enforce $\Ass \rightarrow 
\Gua$ (which are colored gray in Fig.~\ref{fig:hierarchy}).
\begin{center}
\setlength{\tabcolsep}{20pt}
\begin{tabular}{c|c|c}
Nr.  &Behavior &Level in Fig.~\ref{fig:hierarchy}  \\
\hline
1    & $\LTLG(d \wedge \neg f \wedge \neg m \wedge \neg n)$
     &$\Ass \rightarrow \Gua$
\\
2    & $\LTLG(d \wedge m \wedge \neg f \wedge \neg n)$
     &$(\Ass \rightarrow \Gua) \wedge \reachinf \Gua$
\\
3    & $\LTLG(f \wedge m \wedge \neg n \wedge \neg d)$
     &$(\Ass \rightarrow \Gua) \wedge \reachinf \Ass$ 
\\
4    & $\LTLG(n \wedge d \wedge \neg f \wedge \neg m)$
     &$\Gua$ 
\\
5    & $\LTLG(n \wedge \neg d \wedge \neg f \wedge \neg m)$
     &$\Gua \wedge \reachinf \Ass$ 
\\
\end{tabular}
\end{center}

Behavior~1 enforces an assumption violation by blocking the traffic.  This is very 
uncooperative with the customers (and all other drivers on the streets). In this case,
$\reachinf \Ass$ does not hold as the environment cannot satisfy $\Ass$ from any 
point in any trace because it cannot set $b$ to $\TRUE$ at some point without violating $\mathcal{A}$, but not ever doing so violates $\Ass$ as well. The cooperation level specification part $\reachinf \Gua$ does not hold either 
because since $n$ and $m$ are both $\FALSE$ all of the time, $\Gua$ is not fulfilled along any trace.

Behavior~2 is better because printing supplies are always delivered. This satisfies $\reachinf \Gua$ 
because, at any point, the environment can make the computer work again (which occasionally happens with  computers).  $\reachinf \Ass$ is still not satisfied for the same 
reason as before.

Behavior~3 is even less destructive.  The technician tries to fix the computer 
($f=\TRUE$) and brings along free printing supplies ($m=\TRUE$) without blocking traffic.  This 
satisfies the guarantees if the assumptions are satisfied.  On top of that, the 
assumptions can always be satisfied by setting both $b$ and $c$ to $\TRUE$.  
Since $d$ is always $\FALSE$, the last assumption (which was the problem before) 
always holds.

Behavior~4 is better than Behavior~2 but incomparable with Behavior~3.  By 
always (or repeatedly) providing a new computer, $\Gua$ is enforced independent of 
the assumptions.  However, $\reachinf \Ass$ does not hold because, since the 
traffic is blocked ($d=\TRUE$), the assumptions cannot hold.

Behavior~5 is similar but without enforcing a traffic blockade.  It thus satisfies $\Gua$ 
and $\reachinf \Ass$ simultaneously.  The last two of the four assumptions are even 
enforced. However, since the first assumption cannot be enforced by any behavior of the
IT provider, $\Gua \wedge \Ass$ cannot be achieved.

\subsection{Discussion} \label{sec:alt}

Before solving the \newterm{cooperative synthesis} problem in the next section, let us discuss some interesting aspects of our hierarchy of cooperation levels.

\smallskip
\noindent
\textit{Incomparable levels:}
As already discussed in the example, our hierarchy of cooperation levels 
in Fig.~\ref{fig:hierarchy} contains levels that are incomparable, i.e., 
where neither of the levels is stricter or clearly more 
desirable.  The relative preferences between such levels may depend on the 
application.  Among the levels that enforce $\Ass \rightarrow \Gua$ (gray in 
Fig.~\ref{fig:hierarchy}), there is only one incomparability, namely between 
$\Gua$ and $(\Ass \rightarrow \Gua) \wedge \reachinf \Ass$.  The former favors 
the guarantees, the latter the assumptions.  For incomparabilities between a 
level that enforces $\Ass \rightarrow \Gua$ and one that does not, we suggest 
to prefer the former unless there are good reasons not to.  

\smallskip
\noindent
\textit{Symmetry in Fig.~\ref{fig:hierarchy}:}
Our hierarchy is asymmetric because we included $\Ass \rightarrow \Gua$ but not 
$\Gua \rightarrow \Ass$ in $D$.  Including $\Ass \rightarrow \Gua$ contradicts 
the philosophy of cooperation to some extend, but is justified by the fact that we always take the point of view of the system in this paper, and try to synthesize a \emph{correct implementation} that also helps the environment to satisfy its assumptions whenever this is reasonable, but without assuming that the environment has its own goals and behaves \emph{rationally} (as in \cite{DBLP:conf/tacas/ChatterjeeH07,DBLP:conf/tacas/FismanKL10, 
DBLP:conf/vmcai/Chatterjee0FR14, DBLP:conf/stacs/Berwanger07, 
DBLP:conf/csl/BrenguierRS14}). The guarantees often cannot be enforced unconditionally by the system, so the system has to rely on the assumptions to hold. However, the combination of $\mathcal{A} \rightarrow \mathcal{G}$ with $\reachinf\Ass$ (or $\reachinf\Gua$) in a cooperation level specification eliminates the possibility for the system to achieve correctness by simply enforcing a violation of $\Ass$ (and $\Gua$).

\smallskip
\noindent
\textit{Additional operators:}
We did not include any properties with a plain $\reach$-operator in our default 
hierarchy.  The reason is that $\reach\varphi$ is quite a weak goal. The system 
could have exactly one trace on which $\varphi$ is satisfied.  As soon as the 
environment deviates from the input along this trace, 
the system can behave arbitrarily.  Our default hierarchy also 
does not contain any occurrences of $\Ass \vee \Gua$, but this is mainly to keep 
the presentation simple.  Including $\Ass \vee \Gua$ and $\reachinf(\Ass \vee 
\Gua)$ would extend the hierarchy to $23$ levels.  Additionally including 
$\reach$ wherever $\reachinf$ is applied results in $77$ 
levels. Thus, even with extensions in the applied operators, the size of our hierarchy remains manageable.
Details to these refined hierarchies and the reduction rules to
obtain them can be found  
\extendedText{
in Appendix~\ref{sec:app:hie}.
}{
in the appendix of~\cite{extended}.
}

\smallskip
\noindent
\textit{Fine-grainedness:}
Our hierarchy considers two dimensions: the goals (in terms of assumptions and 
guarantees) to achieve, and the certainty with which the goal can be achieved: 
enforced, always reachable ($\reachinf$), and initially reachable ($\reach$, if 
considered).  In contrast, most existing synthesis 
approaches~\cite{DBLP:journals/acta/BloemCGHHJKK14, DBLP:conf/hybrid/EhlersT14, 
DBLP:conf/cav/BloemCHJ09} (see Section~\ref{sec:rel}) that go beyond plain 
correctness only focus on \emph{enforcing} a maximum in the dimension of goals. 
 Yet, in this dimension they are often more fine-grained, e.g., by considering 
individual assumptions and guarantees or how often some property is violated.  
In principle, we could also \changedTwo{increase} the granularity in our goal dimension. 
However, this also comes at a price: it would increase the size of the hierarchy 
and induce more incomparabilities, which makes it more difficult 
to define the preference between the incomparable levels for a concrete application.

\section{Synthesizing Desirable Systems} \label{sec:synt}
\changedOne{
After defining our hierarchy of cooperation levels, we turn towards the synthesis of implementations that maximize the possible cooperation level.
We start by describing how we can synthesize an implementation for a single cooperation level, and then show in Sect.~\ref{sec:optimisticLevelJumping} how to synthesize \newterm{maximally cooperative implementations}, which move upwards in the cooperation level hierarchy whenever possible during the execution.

\subsection{Implementing a \changedTwo{S}ingle \changedTwo{C}ooperation \changedTwo{L}evel}
\label{sec:singleCooperationLevel}

The simple reduction of the cooperative synthesis problem for LTL to synthesis from a logic such as \emph{strategy logic}~\cite{DBLP:journals/iandc/ChatterjeeHP10} is obstructed by the fact that in our semantics, we always evaluate traces from the start when evaluating a subformula of the shape $\reachinf \varphi$. Strategy logic lacks a \newterm{rewind} operator that would allow to jump back to the start of a trace. We could, however, encode a cooperation level specification into CTL* with linear past \cite{Bozzelli}, and use a corresponding synthesis procedure. 

%
%

Instead of following this route, we give a direct automata-theoretic approach to the synthesis of reactive systems that implement some level of cooperation with the environment on a specification $(\Ass, \Gua)$. The automata built in this way keep information about the states of the assumptions and guarantees explicit, which is needed to synthesize maximally cooperative implementations in the next subsection. Also, it makes the synthesis approach applicable to general $\omega$-regular word language specifications.
}

Starting from the linear-time properties $\Ass$ and $\Gua$, we show how to build a non-deterministic Rabin tree automaton that encodes the synthesis problem for some cooperation level specification $H$. Such a tree automaton can be checked for language emptiness in order to perform synthesis for one level of cooperation. 

%
Let $\mathcal{R}_\Ass$, $\mathcal{R}_\Gua$, $\mathcal{R}_{\Ass \rightarrow \Gua}$, and $\mathcal{R}_{\Ass \wedge \Gua}$ be deterministic Rabin word automata that encode $\Ass$, $\Gua$, $\Ass \rightarrow \Gua$, and $\Ass \wedge \Gua$.
We translate the conjuncts (elements of $D$ from Equation~\ref{eqn:cooperationLevelConjunctsConsidered}) of some expression $H$ in the grammar from Equation~\ref{eqn:cooperationLevelGrammar} to non-deterministic Rabin tree automata $\mathcal{T}^D$ individually and then build the product $\mathcal{T}^H$ between these tree automata, which encodes that all elements of $H$ have to be satisfied in a candidate computation tree. As cooperation level specifications have only three types of conjuncts, namely $\varphi$, $\langle E \rangle \varphi$, and $\LTLG \langle E \rangle \varphi$ for some linear-time property $\varphi$, we simply give the translations for these types separately.

\textbf{Case $\varphi$:} If $\varphi$ is a linear-time property, the occurrence of $\varphi$ in a cooperation level specification indicates that $\varphi$ should hold on every trace of a synthesized implementation. Translating $\mathcal{R}_\varphi = (Q,\inalph \times \outalph,\delta,q_0,\mathcal{F})$ to a Rabin tree automaton $\mathcal{T}^D = (Q,\inalph,\outalph,\delta',q_0,\mathcal{F})$ that enforces $\varphi$ to hold along all traces in the tree is a standard construction, where we set:
\begin{align*}
\forall q \in Q, o \in \outalph: \delta'(q,o) & = \{ \{ i \mapsto \delta(q,(i,o)) \mid i \in \inalph \} \}
\end{align*}

\textbf{Case $\langle E \rangle \varphi$:}
Here, we require the synthesized system to offer one path along which $\varphi$ holds. In the tree automaton, we non-deterministically choose this path. Starting with $\mathcal{R}_\varphi = (Q,\inalph \times \outalph,\delta,q_0,\mathcal{F})$, we obtain $\mathcal{T}^D = (Q,\inalph,\outalph,\delta',q_0,\mathcal{F})$ with:
\begin{align*}
\forall q \in Q, o \in \outalph: \delta'(q,o) & = \bigcup_{i \in \inalph} \{ \{ i \mapsto \delta(q,(i,o)) \} \cup \{ i' \mapsto \top \mid i' \in \inalph, i' \neq i \} \}
\end{align*}

\textbf{Case $\LTLG \langle E \rangle \varphi$:}
In this case, for every node in a computation tree, we require the synthesized system to have a path on which $\varphi$ holds that includes the selected node. Given $\mathcal{R}_\varphi = (Q,\inalph \times \outalph,\delta,q_0,\mathcal{F})$, we can implement this requirement as a non-deterministic Rabin tree automaton $\mathcal{T}^D = (Q',\inalph,\outalph,\delta',q'_0,\mathcal{F}')$ with:
\begin{align*}
Q' & = Q \times \BB \\
\forall (q,b) \in Q', o \in \outalph: \delta'((q,b),o) & = \bigcup_{i \in \inalph} \{ \{ i \mapsto (\delta(q,(i,o)),\TRUE) \} \\
& \quad \quad \;\; \cup \{ i' \mapsto (\delta(q,(i',o)),\FALSE) \mid i' \in \inalph, i' \neq i \} \} \\
q'_0 & = (q_0,\TRUE) \\
\mathcal{F}' & = \{ (F \times \{\TRUE\}, G \times \{\TRUE\}) \mid (F,G) \in \mathcal{F} \} \\ & \quad \quad \cup \{ (\emptyset, Q \times \{\FALSE\}) \}
\end{align*}
The automaton augments the states in $Q$ by a Boolean flag. From every node in a computation tree accepted by $\mathcal{T}^D$, regardless of whether it is flagged by $\TRUE$ or $\FALSE$, there must exist a branch consisting only of $\TRUE$-labeled nodes. The original acceptance condition $\mathcal{F}$ must hold along this branch. However, not all branches of a tree accepted by $\mathcal{T}^D$ have to satisfy $\varphi$, as those branches along which the flag is $\FALSE$ infinitely often are trivially accepting. Intuitively, $\mathcal{T}^D$ also enforces the safety hull of $\varphi$ along all branches in the computation tree as once a bad prefix has been seen on the way to a computation tree node, there cannot exist a branch containing the node along which $\varphi$ holds, which is enforced by the (required) successor branch that is always labeled by $\TRUE$.

In order to obtain a non-deterministic Rabin tree automaton $\mathcal{T}^H$ for a complete cooperation level specification $H$, we can compute a product automaton from the Rabin tree automata $\mathcal{T}^D$ for the individual conjuncts. Computing such a product automaton is a standard operation in automata theory. Given a set $\{ (Q_1,\mathcal{I},\mathcal{O},\delta_1,q_{0,1},\mathcal{F}_1), \ldots,$ $(Q_n,\mathcal{I}, \allowbreak \mathcal{O}, \allowbreak \delta_n,q_{0,n}, \allowbreak \mathcal{F}_n) \}$ of non-deterministic Rabin tree automata, their product is defined as the non-deterministic Rabin tree automaton $\mathcal{T}^H = (Q,\mathcal{I},\mathcal{O},\delta,q_0,\mathcal{F})$ with:
{\allowdisplaybreaks
\begin{align*}
Q & = Q_1 \times \ldots \times Q_n \\
\delta((q_1, \ldots, q_n),o) & = \bigotimes_{i \in \{1, \ldots, n\}} \delta_i(q_i,o) \\
q_0 & = (q_{0,1}, \ldots, q_{0,n}) \\
\mathcal{F} & = \{ (F_1 \times \ldots \times F_n, G_1 \times \ldots \times G_n) \mid \\
&\hspace*{0.7cm} (F_1,G_1) \in \mathcal{F}_1, \ldots,  (F_n,G_n) \in \mathcal{F}_n \}
\end{align*}
}

\noindent
The second line of this equation holds for all 
$(q_1, \ldots, q_n) \in Q$ and $o \in \mathcal{O}$.
Also, we used the special operator $\bigotimes$ that maps sets $\{M_i \subseteq 2^{Q_i \rightarrow \mathcal{I}} \}_{j \in \{1, \ldots, n\}}$ to the set
\begin{equation*}
\{ \{ ((q_1, \ldots, q_n),i) \in Q \times \mathcal{I} \mid \bigwedge_{j \in \{1, \ldots, n\}} f_j(i) = q_j \} \mid f_1 \in M_1, \ldots, f_n \in M_n \}
\end{equation*}
By performing reactive synthesis using the product Rabin tree automaton as specification, we can obtain an implementation that falls into the chosen cooperation level. 

\subsection{Maximally Cooperative Synthesis}
\label{sec:optimisticLevelJumping}
For linear-time specifications $\mathcal{A}$ and $\mathcal{G}$, there are typically some cooperation level specifications that are not realizable (such as $\mathcal{A} \wedge \mathcal{G}$) and some that are realizable (such as $\LTLG \langle E \rangle (\mathcal{A} \rightarrow \mathcal{G})$). 
By iterating over all cooperation level specifications and applying reactive synthesis for the product Rabin tree automat\changedOne{a} $\mathcal{T}'$ computed by the construction from above, we can check which is the highest cooperation level that can be realized and compute a respective~implementation.

In some cases, there may however be the possibility to switch to a higher level of cooperation during the system's execution. Take for example the case that initially, the cooperation level $\LTLG \langle E \rangle \mathcal{A} \wedge (\mathcal{A} \rightarrow \mathcal{G})$ is not realizable, but $(\mathcal{A} \rightarrow \mathcal{G})$ is. 
This may be the case if $\mathcal{A}$ represents the constraint that the environment has to perform a certain finite sequence of actions in reaction to the system's output, which is representable as a safety property. If the actions are triggered by the system, then the system cannot ensure that the environment succeeds with performing them correctly, hence violating $\mathcal{A}$.
If $\mathcal{A} \rightarrow \mathcal{G}$ can \changedOne{only} be realized by the system by triggering these actions \changedOne{at least once}, then after these actions have been performed by the environment, the cooperation level specification $\LTLG \langle E \rangle \mathcal{A} \wedge (\mathcal{A} \rightarrow \mathcal{G})$ can however be enforced by the system by not triggering them again.

This observation motivates the search for \newterm{maximally cooperative implementations}, which at any point in time realize the highest possible cooperation level. 
Before describing how to synthesize such implementations, let us first formally define what this~means.






When determining the cooperation level during the execution of a system, we only look at the part of its computation tree that is consistent with the input obtained from the environment so far. 
Given a computation tree $\langle T, \tau \rangle$ and the input part of a prefix trace $w^\mathcal{I} = w^\mathcal{I}_0 w^\mathcal{I}_1 \ldots w^\mathcal{I}_n\in \mathcal{I}^*$, we define the \newterm{bobble tree} \cite{DBLP:conf/rv/EhlersF11} of $\langle T, \tau \rangle$ for $w^\mathcal{I}$ to be the tree $\langle T', \tau' \rangle$, where $T' = \{ \epsilon \} \cup \{ w^\mathcal{I}_0 w^\mathcal{I}_1 \ldots w^\mathcal{I}_k \mid k \leq n \} \cup \{ \changedOne{w^\mathcal{I}} t \mid t \in \mathcal{I}^* \}$, and $\tau'(t) = \tau(t)$ for all $t \in T'$. 
We call $w^\mathcal{I}$ the \newterm{split node} of $\langle T', \tau' \rangle$ and $(\tau(\epsilon),w^\mathcal{I}_0) (\tau(w^\mathcal{I}_0),w^\mathcal{I}_1) \ldots (\tau(w^\mathcal{I}_0 \ldots w^\mathcal{I}_{n-1}), \allowbreak w^\mathcal{I}_n)$ the \newterm{split word} of $\langle T', \tau' \rangle$.
Intuitively, the bobble tree has a single path to the split node
$w^\mathcal{I}$ and is full from that point onwards.
Cutting a full computation tree into a bobble tree does not reduce the cooperation level that the tree fulfills for the specification types in the classes that we can built from the conjuncts in $D$ (from Eqn.~\ref{eqn:cooperationLevelConjunctsConsidered}):
\begin{lemma}
\label{lem:bobbleTreeSplitting}
Let $\langle T, \tau \rangle$ be a computation tree and $\langle T', \tau' \rangle$ be a bobble tree built from $\langle T, \tau \rangle$. If for some cooperation level specification $H$ consisting of conjuncts in $D$, we have that $\langle T, \tau \rangle$ fulfills $H$, then $\langle T', \tau' \rangle$ also fulfills $H$.
\end{lemma}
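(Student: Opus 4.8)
The plan is to reduce to a single conjunct, establish one structural fact about the branches of a bobble tree, and then split into the two kinds of conjuncts that occur in $D$. Since the semantics of $\wedge$ is defined pointwise, $\langle T, \tau \rangle \models H$ holds iff $\langle T, \tau \rangle$ satisfies each conjunct of $H$ separately, and likewise for $\langle T', \tau' \rangle$. Hence it suffices to prove the statement when $H$ is a single conjunct $c \in D$, and every such $c$ is either a plain linear-time property $\varphi$ (namely one of $\Ass$, $\Gua$, $\Ass \rightarrow \Gua$) or of the form $\reachinf \varphi$.

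First I would pin down exactly which traces are branches of the bobble tree. Writing $w^\mathcal{I} = w^\mathcal{I}_0 \ldots w^\mathcal{I}_n$ for the split node, the set $T'$ contains, below the split node, only the spine nodes $w^\mathcal{I}_0 \ldots w^\mathcal{I}_k$ with $k \le n$; at each such node the unique child lying in $T'$ is the next spine node. Consequently every infinite branch of $\langle T', \tau' \rangle$ must follow the spine down to $w^\mathcal{I}$ and may branch freely only from there on. Since $\tau' = \tau$ on $T'$ and $T' \subseteq \mathcal{I}^*$, this gives the key observation
$$L(\langle T', \tau' \rangle) = \{ w \in L(\langle T, \tau \rangle) : \text{the input projection of } w \text{ begins with } w^\mathcal{I} \} \subseteq L(\langle T, \tau \rangle).$$
In particular, every branch of the bobble tree is a branch of the original tree, and all branches of the bobble tree share their first $n+1$ letters (the split word).

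The case $c = \varphi$ is then immediate: $\langle T, \tau \rangle \models \varphi$ means every branch of $\langle T, \tau \rangle$ lies in $L(\varphi)$, and by the key observation $L(\langle T', \tau' \rangle) \subseteq L(\langle T, \tau \rangle)$, so every branch of $\langle T', \tau' \rangle$ also lies in $L(\varphi)$, i.e.\ $\langle T', \tau' \rangle \models \varphi$.

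The case $c = \reachinf \varphi$ carries the real work and is the step I expect to be the main obstacle. Unfolding the semantics, I must show that for every branch $w$ of $\langle T', \tau' \rangle$ and every $i \in \NN$ there is a branch $w'$ of $\langle T', \tau' \rangle$ with $w' \in L(\varphi)$ and $w_0 \ldots w_{i-1} = w'_0 \ldots w'_{i-1}$. The naive attempt — viewing $w$ as a branch of $\langle T, \tau \rangle$ and invoking $\langle T, \tau \rangle \models \reachinf \varphi$ at the same index $i$ — breaks down for small $i$ (when $i \le n$): the witness $w'$ obtained in $\langle T, \tau \rangle$ agrees with $w$ only on the first $i$ letters and may leave the spine before reaching $w^\mathcal{I}$, so it need not be a branch of the bobble tree. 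The remedy is to apply the hypothesis at the larger index $i' = \max(i, n+1)$. This yields $w' \in L(\langle T, \tau \rangle) \cap L(\varphi)$ agreeing with $w$ on the first $i' \ge n+1$ letters; agreement on the first $n+1$ letters forces the input projection of $w'$ to begin with $w^\mathcal{I}$ (as $w$, a bobble-tree branch, starts with the split word), so by the key observation $w'$ is itself a branch of $\langle T', \tau' \rangle$. Since $i' \ge i$, it still agrees with $w$ on the first $i$ letters, and it lies in $L(\varphi)$, providing exactly the required witness and completing the argument.
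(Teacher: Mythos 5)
Your proof is correct, and it is in fact more careful than the paper's own argument where care is needed. The paper dispatches the lemma in three sentences: induction over the structure of $H$, plus the observation that every conjunct in $D$ has an outermost universal quantification over $\lang(\langle T, \tau \rangle)$, so that shrinking the tree's language cannot make the constraints harder to fulfill. Taken literally, that remark only disposes of the plain conjuncts $\varphi \in \{\Ass, \Gua, \Ass \rightarrow \Gua\}$ --- your ``immediate'' case. For conjuncts $\reachinf \varphi$ the semantics also contains an \emph{inner existential} quantification over $\lang(\langle T, \tau \rangle)$, and shrinking the language makes existentials harder, not easier: the witness $w'$ supplied by the hypothesis on $\langle T, \tau \rangle$ need not survive the cut to the bobble tree. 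You identified exactly this obstacle and closed it with the right device: your characterization of $\lang(\langle T', \tau' \rangle)$ as precisely those branches of $\langle T, \tau \rangle$ whose input projection begins with the split node $w^\mathcal{I}$ shows that all bobble-tree branches agree on the first $n+1$ letters, so applying the hypothesis at the inflated index $i' = \max(i, n+1)$ forces the witness to start with the split word and hence to be a branch of $\langle T', \tau' \rangle$, while still agreeing with $w$ on the first $i$ letters. So your skeleton (conjunct-wise decomposition plus the language-subset observation) matches the paper's, but your treatment of $\reachinf$ supplies the step that the paper's one-line justification glosses over; in effect, your write-up is the argument the paper's sketch implicitly appeals to, and it would be the better proof to include.
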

\begin{proof}
Proof by induction over the structure of $H$, using the semantics given on page~\pageref{pageref:cooperationLevelSemantics}. All conjuncts in $D$ have an outermost universal quantification over the elements in $L(\langle T, \tau \rangle)$. Reducing the number of elements in $L(\langle T, \tau \rangle)$ does not make these constraints harder to fulfill.
\end{proof}
Bobble trees provide us with a semantical basis for switching between cooperation levels: if a reactive system $\langle T, \tau \rangle$ for a cooperation level $H$ executes, and after some prefix trace $w$, there exists a bobble tree with split word $w$ that allows a strictly higher cooperation level $H'$, then it makes sense to continue the execution of the system \changedOne{according to cooperation level $H'$}. We thus define:

\begin{definition}
Let $(\mathcal{A}, \mathcal{G})$ be a linear-time specification. We call a computation~tree $\langle T, \tau \rangle$ \newterm{maximally cooperative} if for every split node $t \in T$, the bobble tree induced by $t$ and $\langle T, \tau \rangle$ implements a highest possible cooperation level for $\mathcal{A}$ and $\mathcal{G}$ among the bobble trees with the same split word.
\end{definition}
Note that since for some bobble tree, there may be multiple highest cooperation levels, it makes sense to define a preference order for the cooperation level specifications, so that when synthesizing an implementation, the implementation can always pick the most desired one when the possibility to move up in the hierarchy arises.

In order to synthesize maximally cooperative implementations, we first need a way to check, for every split word, for the existence of a bobble tree for cooperation level specifications.
The special structure of the tree automata $\mathcal{T}^H$ built according to the construction in Sect.~\ref{sec:singleCooperationLevel} offers such a way. 

\begin{definition}
Let $\mathcal{T}^H = (Q,\inalph,\outalph,\delta,q_0,\mathcal{F})$ be a non-deterministic tree automaton for a cooperation level specification built  according to the (product) construction from Sect.~\ref{sec:singleCooperationLevel}. We have that $Q$ is of the shape $C_1 \times \ldots \times C_n$, where for every $i \in \{1, \ldots, n\}$, we either have $C_i = Q'$ or $C_i = Q' \times \BB$ for some Rabin word automaton state set $Q'$. For a state $q = (c_1, \ldots, c_n) \in Q$, we define $\mathit{unpack}(q) = \mathit{unpack}(c_1) \cup \ldots \cup \mathit{unpack}(c_n)$, where we concretize $\mathit{unpack}(q') = q'$ and $\mathit{unpack}((q',b))=q'$ for some word automaton state $q'\in Q'$ and $b \in \BB$.
\end{definition}

\begin{lemma}
\label{lem:productTreeAutomatonProperties}
Let $\mathcal{T}^H$ be a tree automaton buil\changedTwo{t} according to the (product) construction from Sect.~\ref{sec:singleCooperationLevel} from a cooperation level specification with conjuncts in $D$ over the linear-time specifications $\mathcal{A}$, $\mathcal{G}$, $\mathcal{A} \rightarrow \mathcal{G}$, and $\mathcal{A} \wedge \mathcal{G}$. Let those linear-time specifications be represented by Rabin word automata with the state sets $Q_{\mathcal{A}}$, $Q_{\mathcal{G}}$, $Q_{\mathcal{A} \rightarrow \mathcal{G}}$, and $Q_{\mathcal{A} \wedge \mathcal{G}}$, respectively. Without loss of generality, let these state sets be disjoint.
We have that:
\begin{enumerate}
\item For all reachable states $q$ in $\mathcal{T}^H$, there is at most one state in $\mathit{unpack}(q)$ from each of $Q_{\mathcal{A}}$, $Q_{\mathcal{G}}$, $Q_{\mathcal{A} \rightarrow \mathcal{G}}$, and $Q_{\mathcal{A} \wedge \mathcal{G}}$.
\item All states in $\mathcal{T}^H$ with the same set $\mathit{unpack}(q)$ have the same languages.
\end{enumerate}
\end{lemma}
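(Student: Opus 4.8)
The plan is to derive both claims from a single structural fact about the construction in Section~\ref{sec:singleCooperationLevel}: since the word automata $\mathcal{R}_\Ass$, $\mathcal{R}_\Gua$, $\mathcal{R}_{\Ass\to\Gua}$, $\mathcal{R}_{\Ass\wedge\Gua}$ are deterministic, and since $D$ (Equation~\ref{eqn:cooperationLevelConjunctsConsidered}) only uses the liftings $\varphi$ and $\reachinf\varphi$, the underlying word-automaton state of each product component is a deterministic function of the trace read along a branch. Concretely, in the $\reachinf\varphi$ construction the non-deterministic choice merely selects which successor carries the flag $\TRUE$: every successor's word-automaton component equals $\delta(q,(i,o))$ irrespective of the flag and of the selected input. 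This is exactly the place where excluding the plain $\langle E\rangle\varphi$ conjunct is essential, as there off-branch successors are redirected to $\top$ and the word-automaton state would no longer be pinned down by the trace.

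For part~1 I would proceed by induction on the length of a path from $q_0$ to a reachable state $q$. At $q_0$ every component that tracks a given word automaton $\mathcal{R}_X$ sits in the unique initial state of $\mathcal{R}_X$, so all such components agree and $\mathit{unpack}(q_0)$ meets each $Q_X$ in at most one point. For the inductive step I fix the output $o$ and input $i$ on the last edge and read off the transition rules for the $\varphi$- and $\reachinf\varphi$-conjuncts together with the product operator $\bigotimes$: each $\mathcal{R}_X$-tracking component sends its current state $s_X$ to $\delta_X(s_X,(i,o))$, independently of any flag or selection. Hence agreement is preserved, and because the state sets $Q_X$ are pairwise disjoint by assumption, $\mathit{unpack}(q)$ contains at most one member of each $Q_X$.

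For part~2 I would first use part~1 to normalise the situation: if $\mathit{unpack}(q)=\mathit{unpack}(q')$, then disjointness of the $Q_X$ forces $q$ and $q'$ to carry the same word-automaton state in every component, so they can differ only in the Boolean flags attached to the $\reachinf\varphi$-components. It then suffices to show that these flags never influence the accepted language, which I would extract from two features of the construction. First, the transition function of $\mathcal{T}^H$ does not inspect the flags (each $\delta'((q,b),o)$ is independent of $b$); therefore, over any fixed computation tree, the run trees that can start from $q$ and those that can start from $q'$ are in bijection and coincide on every node except possibly the root, where the differing flags reside. Second, acceptance of each branch is a Rabin condition on the set $\inf$ of infinitely recurring states, which is insensitive to the single root node. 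Consequently a run tree from $q$ is accepting iff the matched run tree from $q'$ is accepting, so $q$ and $q'$ accept the same computation trees and have equal languages.

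The part I expect to require the most care is bookkeeping rather than conceptual depth: I must verify that the branch-selection non-determinism of the $\reachinf\varphi$-factors touches only the flags and never the tracked word-automaton states, that this survives the product $\bigotimes$, and---for part~2---that the flags of all non-root nodes are determined by the non-deterministic choices rather than inherited from the root flag, so that the entire discrepancy between the two families of run trees is confined to the root and is thus invisible to the $\inf$-based acceptance condition.
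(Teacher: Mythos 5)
Your proof is correct and takes essentially the same route as the paper's: part~1 by noting that both the $\varphi$- and $\reachinf\varphi$-factors deterministically track the word-automaton state along every branch of the tree (you merely make the induction explicit, including the pertinent remark that excluding plain $\reach\varphi$ conjuncts, whose off-branch successors jump to $\top$, is what makes this tracking work), and part~2 by decomposing product states into their factors and observing that states differing only in their Boolean flags have identical successor-state functions, so the discrepancy is confined to the root node and is invisible to the $\inf$-based Rabin acceptance condition. No changes needed.
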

\begin{proof}
The first claim follows directly from the constructions from Sec.~\ref{sec:singleCooperationLevel}: for every state in the individual Rabin tree automata $\mathcal{T}^D$ built from cooperation level specification conjuncts of the shapes $\LTLG \langle E \rangle \varphi$ and $\varphi$, the automata always track the state of the corresponding word automata for a branch of the tree.

For the second claim, we decompose the states in $\mathcal{T}^H$ into their factors and prove the claim for each factor individually. For factors originating from cooperation level specifications of the shape $\varphi$ for some linear-time property $\varphi$, this fact is trivial. For factors originating from specification conjuncts of the shape $\LTLG \langle E \rangle \varphi$, the claim follows from the fact that the tree automaton states that only differ in their Boolean flag have the same successor state functions.
\end{proof}
Lemma~\ref{lem:productTreeAutomatonProperties} tells us how we can switch between cooperation levels. 
Assume that we can always read off \emph{all} current states of $Q_{\mathcal{A}}$, $Q_{\mathcal{G}}$, $Q_{\mathcal{A} \rightarrow \mathcal{G}}$, and $Q_{\mathcal{A} \wedge \mathcal{G}}$ from $\mathit{unpack}(q)$ for any state $q$ of a product tree automaton $\mathcal{T}^H$. 
This assumption can be made satisfied by letting $\mathcal{T}^H$ be the product of \emph{all} elements in the set of considered cooperation level specification conjuncts $D$, but only \changedOne{using} the ones in the current cooperation level $H$ when building the acceptance condition of $\mathcal{T}^H$.
Now consider a second cooperation level specification $H'$ that is higher in the hierarchy than $H$ and its associated tree automaton ${\mathcal{T}^H}'$. Let $W$ be the states in $\mathcal{T}^H$ with a non-empty language and $W'$ be the states of ${\mathcal{T}^H}'$ with a non-empty language. 
If we find a state $q'$ in ${\mathcal{T}^H}'$ for which $\mathit{unpack}(q') = \mathit{unpack}(q)$, and state $q'$ has a non-empty language, then we can simply re-route every transition to $q$ to $q'$ and obtain a new tree automaton with the states in $\mathcal{T}^H$ and ${\mathcal{T}^H}'$ that enforces a higher cooperation level on a bobble tree along all branches in run trees that lead to $q$. If we now identify the non-empty tree automaton states for all cooperation level specifications in our hierarchy, and apply this approach to all pairs of the corresponding tree automata \changedOne{and all of their states}, we end up with a tree automaton that accepts maximally cooperative computation trees. More formally, this line of reasoning shows the correctness of the following construction:
\begin{definition}
\label{def:buildingMaximallyCooperativeRabinTreeAutomaton}
Let $H_1, \ldots, H_{14}$ be the cooperation level specifications of our hierarchy, ordered by preference and respecting the hierarchy's partial order $\leq_H$, and let $\mathcal{T}^H_1, \ldots, \mathcal{T}^H_{14}$ be the non-deterministic Rabin tree automata for them. Let us furthermore rename states $q$ in an automaton $\mathcal{T}^H_j$ to $(q,j)$ to make their names unique, and let every tree automaton $\mathcal{T}^H_j$ be given as a tuple $(Q_j,\mathcal{I},\mathcal{O},\delta_j,q_{0,j},\mathcal{F}_j)$. Let $W \subseteq \bigcup_j Q_j$ be the states in the tree automata with a non-empty language.
We define the Rabin tree automaton $\mathcal{T} = (Q',\mathcal{I},\mathcal{O},\delta',q'_0,\mathcal{F}')$ encoding the maximally cooperative synthesis problem as follows:
\allowdisplaybreaks
\begin{align*}
Q' & = \bigcup_{j \in \{1, \ldots, 14\}} Q_j \\
q'_0 & = q_{0,j} \text { \ for \ } j = \max \{j \in \{1, \ldots, 14\} \mid q_{0,j} \in W \} \\
\mathcal{F}' & = \mathcal{F}_1 \cup \ldots \cup \mathcal{F}_{14} \\
\delta'((q,j),o) & = \{  \{ i \mapsto (q'',j') \mid j' = \max \{ k \in \{1, \ldots, 14\} \mid \exists (q'',k) \in Q_k : 
\\ & \quad \quad q'' \in W, \mathit{unpack}(q'') = \mathit{unpack}(f(i)), H_j \leq_H H_k \}, \\ & \quad \quad \mathit{unpack}(q'') = \mathit{unpack}(f(i)), \\ & \quad \quad ((j=j') \rightarrow q'' = f(i))\} \mid f \in \delta_j(q,o) \} \\
& \quad \quad \text{for all } (q,j) \in Q' \text{ and } i \in \mathcal{I}
\end{align*}
\end{definition}
\begin{theorem} \label{th:final}
A Rabin tree automaton built from linear-time specifications $\mathcal{A}$ and $\mathcal{G}$ according to Def.~\ref{def:buildingMaximallyCooperativeRabinTreeAutomaton} encodes the maximally cooperative synthesis problem for the specification $(\mathcal{A}, \mathcal{G})$. Building the tree automaton and checking it for emptiness can be performed in doubly-exponential time for specifications in LTL. 

\noindent
For specifications given as deterministic Rabin word automata, the time complexity is polynomial in the number of states and exponential in the number of acceptance pairs.
\end{theorem}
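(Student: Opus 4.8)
The plan is to split the statement into a \textbf{correctness} claim---that the automaton $\mathcal{T}$ of Definition~\ref{def:buildingMaximallyCooperativeRabinTreeAutomaton} accepts exactly the maximally cooperative computation trees, so that emptiness checking solves the synthesis problem---and a \textbf{complexity} claim, proved by counting states and acceptance pairs along the construction pipeline. The engine of the correctness argument is the observation, provided by Lemma~\ref{lem:productTreeAutomatonProperties}, that realizability of a cooperation level from a split node is a function of the unpacked state set alone: part~(1) guarantees that $\mathit{unpack}(q)$ contains a well-defined state of each of $\mathcal{R}_\Ass$, $\mathcal{R}_\Gua$, $\mathcal{R}_{\Ass \rightarrow \Gua}$, and $\mathcal{R}_{\Ass \wedge \Gua}$, and part~(2) guarantees that two states of $\mathcal{T}^H_k$ sharing an unpacked set have the same language. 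Consequently ``does some bobble-tree continuation from $q$ realize level $H_k$?'' is equivalent to ``does $\mathcal{T}^H_k$ possess a non-empty-language state $q''$ with $\mathit{unpack}(q'')=\mathit{unpack}(q)$?'', i.e.\ to the membership test $q'' \in W$ hard-wired into $\delta'$.

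Next I would establish that the transition function realizes the \emph{true} per-node maximum even though it only searches among levels $H_k$ with $H_j \leq_H H_k$. This restriction is free because the maximal realizable level is non-decreasing along every branch: if the bobble tree at a split node $w^\mathcal{I}$ realizes $H_j$, then its further bobble cut at $w^\mathcal{I} i$ coincides with the full implementation's bobble tree there and, by Lemma~\ref{lem:bobbleTreeSplitting}, still realizes $H_j$, so the child's maximum is $\geq_H H_j$. For the acceptance bookkeeping, note that along any branch of a run tree the index $j$ is non-decreasing (the linear order $1,\ldots,14$ extends $\leq_H$) and hence stabilizes at some $k$; since the renamed state sets $Q_j$ are disjoint, $\inf(\pi) \subseteq Q_k$, so $\pi$ satisfies $\mathcal{F}' = \bigcup_j \mathcal{F}_j$ iff it satisfies $\mathcal{F}_k$, i.e.\ iff the branch's tail realizes $H_k$. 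Soundness then follows: an accepted tree selects at each node the maximal realizable level and the rerouting into a non-empty matching state guarantees an accepting continuation, so by the definition of maximal cooperativeness the tree is maximally cooperative. Completeness is the converse stitching: from a maximally cooperative tree I read off the realized level at each split node, match it to the index chosen by $\delta'$, and glue the per-level accepting run trees (which exist by $q'' \in W$) into one run tree of $\mathcal{T}$; since a maximally cooperative implementation always exists, this also yields non-emptiness.

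For the complexity claim I would trace sizes through the pipeline. For LTL, each of $\Ass$, $\Gua$, $\Ass \rightarrow \Gua$, $\Ass \wedge \Gua$ becomes a deterministic Rabin word automaton with doubly-exponentially many states and exponentially many acceptance pairs; each tree automaton $\mathcal{T}^D$ adds only a single Boolean flag, and the product over the (at most seven) conjuncts keeps the state count doubly-exponential and the pair count exponential. Determining $W$ costs one tree-automaton emptiness check per level, each polynomial in states and exponential in acceptance pairs by the Pnueli--Rosner procedure recalled in Section~\ref{sec:prel}; the final $\mathcal{T}$ is the disjoint union of the fourteen level automata with $\mathcal{F}' = \bigcup_j \mathcal{F}_j$, so its emptiness check remains within the same bounds, yielding doubly-exponential time overall. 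When $\Ass$ and $\Gua$ are supplied directly as deterministic Rabin word automata, only the constant-factor products and the flag contribute, so the size is polynomial in the number of states and the acceptance condition, being a product, is exponential in the number of acceptance pairs.

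The step I expect to be the main obstacle is making the completeness stitching rigorous: I must verify that committing each subtree to its locally maximal level via the purely $\mathit{unpack}$-based test is \emph{globally} consistent---that the accepting run trees witnessing non-emptiness of the matched states can be assembled into a single coherent run tree without conflicting choices at shared prefixes---and that the union acceptance condition certifies, branch by branch, exactly the level at which that branch has stabilized. The leverage for both points is again Lemma~\ref{lem:productTreeAutomatonProperties}: because realizability and hence the permissible successor states depend only on $\mathit{unpack}$, the local choices decouple and the rerouting is well defined.
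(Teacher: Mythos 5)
Your proposal is correct and takes essentially the same route as the paper's proof: correctness rests on the $\mathit{unpack}$-based re-routing argument of Lemma~\ref{lem:productTreeAutomatonProperties} together with the observation that the level index is non-decreasing along every branch and hence switches only finitely often and only upward (so that, with the $Q_j$ disjoint, the union condition $\mathcal{F}'$ reduces branch-wise to the $\mathcal{F}_k$ of the stabilized level), and the complexity claim follows the same pipeline count (doubly-exponential translation to deterministic Rabin word automata with singly-exponential acceptance pairs, polynomial-size tree automata, and computation of $W$ and the final emptiness check in time polynomial in states and exponential in pairs). You in fact spell out details---the monotonicity of the realizable level via Lemma~\ref{lem:bobbleTreeSplitting} and the soundness/completeness stitching---that the paper compresses into the single remark that the automaton switches cooperation levels only finitely often and only to strictly higher ones.
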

\begin{proof}
For the correctness, note that the tree automaton can switch between cooperation levels only finitely often, and whenever it switches, it only does so to strictly higher levels of cooperation.

To obtain doubly-exponential time complexity of maximally cooperative synthesis from LTL specifications, we first translate $\mathcal{A}$, $\mathcal{G}$, $\mathcal{A} \rightarrow \mathcal{G}$, and $\mathcal{A} \wedge \mathcal{G}$ to deterministic Rabin word automata with a doubly-exponential number of states and a singly-exponential number of acceptance condition pairs, which takes doubly-exponential time. The overall sizes of the tree automata build for the cooperation levels are then polynomial in the sizes of the Rabin word automata. We can compute $W$ in time exponential in the number of acceptance pairs and polynomial in the number of tree automaton states, which sums up to doubly-exponential time (in the lengths of $\mathcal{A}$ and $\mathcal{G}$) for LTL. When building $\mathcal{T}$ and computing the accepted computation tree, the same argument applies.
\changedOne{
If the specification is given in form of deterministic Rabin word automata, then all the product automata computed in the process have a number of states that is polynomial in the number of states of the input automata and a number of acceptance pairs that is polynomial in the number of acceptance pairs of the input automata. 
By the complexity of checking Rabin tree automata for emptiness and computing $W$, the second claim follows as well.

}
\end{proof}

\changedOne{Theorem~\ref{th:final} states that synthesizing maximally cooperative implementations from LTL specifications does not have a higher complexity than LTL synthesis in general. Note, however, that both the synthesis time and the size of the resulting systems can increase in practice as the Rabin automata built in cooperative synthesis are larger than in standard LTL synthesis.}

Also note that we can extend the theory from this subsection to also include cooperation level specification conjuncts of the shape $\langle E \rangle \varphi$. However, we would need to add flags to the tree automata to keep track of whether the current branch in a computation tree is the one on which $\varphi$ should hold. As these flags need to be tracked along changes between the cooperation levels, the definitions from this subsection would become substantially more complicated. Thus, we refrained from doing so here.

\section{Conclusion} \label{sec:concl}

Conventional synthesis algorithms often produce systems that 
actively work towards violating environment assumptions rather than satisfying assumptions and guarantees together.
In this paper, we worked out a fine-grained hierarchy of cooperation levels between the system and the environment for satisfying both guarantees \emph{and} assumptions as far as possible.
We also presented a synthesis procedure that maximizes the cooperation level in the hierarchy for linear-time specifications, such as Linear Temporal Logic (LTL).
The worst-case complexity of this procedure for LTL is the same as of conventional LTL synthesis.
Our approach relieves the user from requiring cooperation in the specification explicitly, which helps to keep the specification clean and abstract.

In the future, we plan to work out cooperative synthesis procedures for other 
specification languages, and evaluate the results on industrial applications. 

\bibliography{bib}

\extendedText{

\appendix
\clearpage


\section{Another Example}
\label{sec:app:ex}

To complement Section~\ref{sec:ex}, this section illustrates \emph{all} 
cooperation levels from Fig.~\ref{fig:hierarchy} on a more technical example.  
The assumptions $\Ass$ are defined using the Rabin word automaton 
$\mathcal{R}_\Ass = (Q,\allowbreak \inalph \times \outalph,\allowbreak 
\delta,\allowbreak q_0,\allowbreak \{(\emptyset,G_\Ass)\})$ shown in 
Fig.~\ref{fig:example}, where $Q=\{q_0,\ldots,q_{16}\}$, the input alphabet is 
$\inalph=\{x_0,x_1,x_2\}$, and the output alphabet is 
$\outalph=\{y_0,\ldots,y_{12}\}$.  Fig.~\ref{fig:example} labels edges with 
conditions over the input and output letters to define the transition function 
$\delta$. For instance, the condition $\neg x_0$ means that the transition is taken for 
all letters $(x,y)\in \inalph \times \outalph$ where $x \neq x_0$.  Edges 
without a label are always taken. 
Note 
that the system can only influence the next state from $q_2$: with output $y_i$, 
the next state will be $q_i$.  The acceptance condition is defined with $G_\Ass 
=\{q_0, q_6, q_7, q_8, q_9, q_{11}, q_{12}\}$. That is, the blue states in 
Fig.~\ref{fig:example} must be visited infinitely often for $\Ass$ to be 
satisfied.  The guarantees $\Gua$ are defined using the automaton 
$\mathcal{R}_\Gua = (Q,\allowbreak \inalph \times \outalph,\allowbreak 
\delta,\allowbreak q_0,\allowbreak \{(\emptyset,G_\Gua)\})$, which differs from 
$\mathcal{R}_\Ass$ only in the acceptance condition: with $G_\Gua = 
\{q_2,q_{13},q_{14},q_{15}\}$, $\Gua$ is satisfied if the green states in 
Fig.~\ref{fig:example} are visited infinitely often.

\begin{figure}
\centering
\begin{tikzpicture}[auto]
\node[state,draw,inner sep=1pt,minimum size=6mm]
  at (0,-0.8)   (S3) {$q_3$};
\node[state,initial left,draw,inner sep=1pt,minimum size=6mm,fill=blue!60!white]
  at (-1.5,-0.8)   (S0) {$q_0$};
\node[state,draw,inner sep=1pt,minimum size=6mm,fill=green!70!white]
  at (-1.5,-2)   (S2) {$q_2$};
\node[state,draw,inner sep=1pt,minimum size=6mm]
  at (0,-1.5)   (S4) {$q_4$};
\node[state,draw,inner sep=1pt,minimum size=6mm]
  at (-3,-1.5)  (S1) {$q_1$};
\node[state,draw,inner sep=1pt,minimum size=6mm]
  at (0,-3)   (S5) {$q_5$};
\node[state,draw,inner sep=1pt,minimum size=6mm,fill=blue!60!white]
  at (1.5,-3)   (S6) {$q_6$};
\node[state,draw,inner sep=1pt,minimum size=6mm,fill=blue!60!white]
  at (3,-3)   (S7) {$q_7$};
\node[state,draw,inner sep=1pt,minimum size=6mm,fill=green!70!white]
  at (2.25,-4) (S13) {$q_{13}$};
\node[state,draw,inner sep=1pt,minimum size=6mm,fill=blue!60!white]
  at (-1.5,-3)  (S9) {$q_9$};
\node[state,draw,inner sep=1pt,minimum size=6mm]
  at (-3,-3)  (S10) {$q_{10}$};
\node[state,draw,inner sep=1pt,minimum size=6mm,fill=blue!60!white]
  at (-4.5,-3)  (S8) {$q_8$};
\node[state,draw,inner sep=1pt,minimum size=6mm,fill=green!70!white]
  at (-4.5,-4) (S14) {$q_{14}$};
\node[state,draw,inner sep=1pt,minimum size=6mm,fill=blue!60!white]
  at (-6,-3)  (S11) {$q_{11}$};
\node[state,draw,inner sep=1pt,minimum size=6mm,fill=green!70!white]
  at (-6,-4) (S15) {$q_{15}$};
\node[state,draw,inner sep=1pt,minimum size=6mm,fill=blue!60!white]
  at (-7.5,-3)  (S12) {$q_{12}$};
\node[state,draw,inner sep=1pt,minimum size=6mm]
  at (-7.5,-4) (S16) {$q_{16}$};

\path
(S0) edge[->,bend angle=20, bend left] (S2)
(S1) edge[->] node[xshift=-0.5mm, yshift=-1.5mm] {$\neg x_0$} (S0)
(S1) edge[->,bend angle=20, bend left] node[xshift=-3mm, yshift=-0.8mm] {$x_0$} (S2)
(S2) edge[->,bend angle=20, bend left] node[xshift=1mm, yshift=1mm] {$y_0$} (S0)
(S2) edge[->,bend angle=20, bend left] node[xshift=3.5mm, yshift=3.5mm] {$y_1$} (S1)
(S2) edge[->,out=-35, in=-65, loop] node[xshift=-2mm, yshift=0.5mm] {$y_2$} (S2)
(S2) edge[->,bend angle=10, bend left] node[xshift=2.5mm, yshift=-0.5mm] {$y_3$} (S3)
(S2) edge[->,bend angle=20, bend right] node[xshift=2mm, yshift=-1mm] {$y_4$} (S4)
(S2) edge[->,bend angle=0, bend left] node[xshift=1mm, yshift=-2.5mm] {$y_5$} (S5)
(S2) edge[->,bend angle=11, bend left] node[xshift=7.8mm, yshift=-5mm] {$y_6$} (S6)
(S2) edge[->,bend angle=11, bend left] node[xshift=11.5mm, yshift=-5mm] {$y_7$} (S7)
(S2) edge[->,out=183, in=40] node[xshift=-12mm, yshift=-0.5mm] {$y_{8}$} (S8)
(S2) edge[->,bend angle=0, bend left] node[xshift=-4.5mm, yshift=0mm] {$y_9$} (S9)
(S2) edge[->,bend angle=0, bend left] node[xshift=-8mm, yshift=1.5mm] {$y_{10}$} (S10)
(S2) edge[->,out=183, in=40] node[xshift=-21mm, yshift=-1.5mm] {$y_{11}$} (S11)
(S2) edge[->,out=183, in=40] node[xshift=-22mm, yshift=-1.5mm] {$y_{12}$} (S12)
(S3) edge[->, loop right] node[] {$\neg x_0$} (S3)
(S3) edge[->] node[xshift=0mm, yshift=3.5mm] {$x_0$} (S0)
(S4) edge[->, loop right] node[] {$\neg x_0$} (S4)
(S4) edge[->,bend angle=-15, bend left] node[xshift=0.5mm, yshift=4mm] {$x_0$} (S2)
(S5) edge[->, loop below] (S5)
(S6) edge[->, loop right] node[xshift=-1mm, yshift=0mm] {$x_0$} (S6)
(S6) edge[->] node[xshift=-1mm, yshift=-1mm] {$x_1$} (S13)
(S6) edge[->] node[xshift=0mm, yshift=0.5mm] {$x_2$} (S5)
(S7) edge[->, loop right] node[xshift=-4mm, yshift=2.5mm] {$x_0$} (S7)
(S7) edge[->] node[xshift=0mm, yshift=2.5mm] {$\neg x_0$} (S13)
(S8) edge[->, loop right] node[xshift=-0.5mm, yshift=0mm] {$x_0$} (S8)
(S8) edge[->,bend angle=40, bend left] node[xshift=-0.5mm, yshift=0mm] {$\neg x_0$} (S14)
(S14) edge[->,bend angle=40, bend left] (S8)
(S9) edge[->, loop below] (S9)
(S10) edge[->, loop below] node[xshift=0mm, yshift=0.5mm] {$x_0$} (S10)
(S10) edge[->] node[xshift=0mm, yshift=-4mm] {$\neg x_0$} (S9)
(S11) edge[->, loop right] node[xshift=-2.5mm, yshift=1mm] {$\neg x_0$} (S11)
(S11) edge[->,bend angle=40, bend left] node[xshift=-0.5mm, yshift=0mm] {$x_0$} (S15)
(S15) edge[->,bend angle=40, bend left] node[xshift=1mm, yshift=0mm] {$x_0$} (S11)
(S15) edge[->, loop right] node[xshift=-2.5mm, yshift=1mm] {$\neg x_0$} (S15)
(S12) edge[->, loop right] node[xshift=-2.5mm, yshift=1mm] {$\neg x_0$} (S12)
(S12) edge[->,bend angle=40, bend left] node[xshift=-0.5mm, yshift=0mm] {$x_0$} (S16)
(S16) edge[->,bend angle=40, bend left] node[xshift=5mm, yshift=0mm] {$x_0$} (S12)
(S16) edge[->, loop right] node[xshift=-2.5mm, yshift=1mm] {$\neg x_0$} (S16)
(S13) edge[->, loop right] (S13)
;
\end{tikzpicture}
\caption{Example specification to illustrate all cooperation levels from
Fig.~\ref{fig:hierarchy}.}
\label{fig:example}
\end{figure}
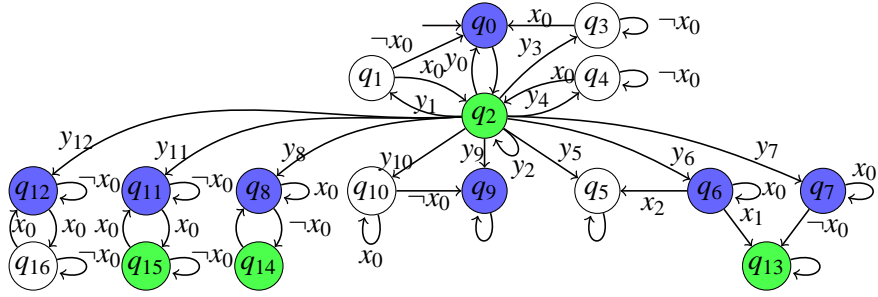

The following paragraphs present and discuss one system behavior (defined in the 
form of a computation tree) per cooperation level in Fig.~\ref{fig:hierarchy}.

\underline{\smash{$\Ass \rightarrow \Gua$:}}
The computation tree $\langle \inalph^*, \tau_5 \rangle$ with 
$\tau_5(w^\mathcal{I}) = y_5$ for all $w^\mathcal{I} \in \inalph^*$ represents a 
system that always outputs $y_5$.  It satisfies $\Ass \rightarrow \Gua$ because 
$q_5$ is a trap where neither $\Ass$ nor $\Gua$ is satisfied.  This is 
``correct'' but rather unsatisfactory.

\underline{\smash{$(\Ass \rightarrow \Gua) \wedge \reachinf \Gua$:}} 
is satisfied by the computation tree $\langle \inalph^*, \tau_4 \rangle$ with 
$\tau_4(w^\mathcal{I}) = y_4$.  Again, $\Ass \rightarrow \Gua$ holds because 
staying in $\{q_2,q_4\}$ violates $\Ass$.  Still, at any point in time, there 
exists some future environment behavior that satisfies $\Gua$ (namely one that 
choses $x_0$ infinitely often).  Thus, $\tau_4$ is slightly better than 
$\tau_5$.

\underline{\smash{$(\Ass \rightarrow \Gua) \wedge \reachinf \Ass$:}} 
is satisfied by $\langle \inalph^*, \tau_3 \rangle$ with $\tau_3(w^\mathcal{I}) 
= y_3$:  At any point in time, $\Ass$ can be satisfied for some future 
environment behavior (namely if $x_0$ is chosen in $q_3$ infinitely often).  
Since $\Ass \rightarrow \Gua$ (because of the unconditional edge from $q_0$ to 
$q_2$), this implies that also $\Gua$ can be reached.  Yet, neither $\Ass$ nor 
$\Gua$ are enforced because the environment could always give $x_1$ to get stuck 
in $q_3$.  However, assuming that the environment does not behave totally 
self-destructive, both $\Ass$ and $\Gua$ will be satisfied.  Thus, $\tau_3$ is 
even better than $\tau_4$.

\underline{\smash{$\Gua$:}} 
The computation tree with $\tau_2(w^\mathcal{I}) = y_2$ enforces $\Gua$, but 
defeats any hope of reaching $\Ass$. It is thus better than $\tau_2$ but 
incomparable with $\tau_3$: Enforcing $\Gua$ is better than having $\Gua$ 
reachable, but $\Ass$ being unreachable is worse than having $\Ass$ reachable. 
Hence, the choice between $\tau_2$ and $\tau_3$ is a question of selfishness.

\underline{\smash{$\Gua \wedge \reachinf \Ass$:}} 
is satisfied by $\tau_1(w^\mathcal{I}) = y_1$: No matter if the environment 
picks $x_0$ or not in $q_1$, the state $q_2$ will always be visited because of 
the unconditional edge from $q_0$ to $q_2$.  Visiting $q_0$ infinitely often 
(and thus satisfying $\Ass$) is always possible, but not enforced. Hence, 
$\tau_1$ dominates both $\tau_2$ and $\tau_3$.

\underline{\smash{$\Gua \wedge \Ass$:}} 
is enforced by $\tau_0(w^\mathcal{I}) = y_0$: independent of the environment, 
both $q_0$ and $q_2$ are visited repeatedly.  Thus, $\tau_0$ is the most 
desirable system behavior.

The remaining cooperation levels from the hierarchy in Fig.~\ref{fig:hierarchy} 
do not satisfy the traditional correctness criterion $\Ass \rightarrow \Gua$.  
Still, a computation tree for such a level behaves better than arbitrarily in 
various ways, and are useful from states where $\Ass \rightarrow \Gua$ cannot be
enforced.  

\underline{\smash{$\reachinf(\Ass \rightarrow \Gua)$:}}
is satisfied by the computation tree $\langle \inalph^*, \tau_6 \rangle$ with 
$\tau_6(w^\mathcal{I}) = y_6$ for all $w^\mathcal{I} \in \inalph^*$. At any 
point in time, there exists some environment behavior that either violates 
$\Ass$ (by going to $q_5$) or that satisfies $\Gua$ (by going to $q_{13}$). 
However, $\Gua$ alone is not reachable from every point in every execution: if 
the environment already traversed to $q_5$, $\Gua$ is lost.  Similarly, $\Ass$ 
is not reachable at every point in time because the environment may already have 
traversed to $q_{13}$.  Finally,  $\Ass \rightarrow \Gua$ is not enforced 
because with input $x_0$ the environment may stay in $q_6$ forever, thereby 
satisfying $\Ass$ but not $\Gua$.  Nevertheless, having $\Ass \rightarrow \Gua$ 
feasible at any point in time is better than nothing.

\underline{\smash{$\reachinf \Gua$:}}  
The computation tree with $\tau_7(w^\mathcal{I}) = y_7$ satisfies $\reachinf 
\Gua$, and is thus better than $\tau_6$. $\reachinf \Ass$ does not hold because 
from $q_{13}$, $\Ass$ is unreachable.  $\Ass \rightarrow \Gua$ is not enforced 
because the environment could stay in $q_7$ forever.  Yet, $\tau_7$ is still 
incomparable with $\tau_5$ (enforcing $\Ass \rightarrow \Gua$), because with 
$\tau_5$, there is not even a hope of reaching $\Gua$.

\underline{\smash{$\reachinf \Ass$:}} 
is similar to the previous case.  The computation tree with 
$\tau_{10}(w^\mathcal{I}) = y_{10}$ satisfies $\reachinf \Ass$, but $\Ass$ is 
not enforced because the environment can stay in $q_{10}$.  Moreover, 
$\reachinf(\Ass \rightarrow \Gua)$ does not hold because it is unreachable from 
$q_9$.

\underline{\smash{$\Ass$:}} 
The computation tree with $\tau_{9}(w^\mathcal{I}) = y_{9}$ enforces $\Ass$, and 
thus dominates $\tau_{10}$.  Yet all hopes for satisfying the guarantees are 
lost.

\underline{\smash{$\Ass \wedge \reachinf \Gua$:}}  
The system behavior $\tau_{8}(w^\mathcal{I}) = y_{8}$ enforces $\Ass \wedge 
\reachinf \Gua$ and  eliminates this defect of $\tau_9$: if the environment is 
not hostile and produces $x_1$ or $x_2$ infinitely often, $\Gua$ is achievable.

\underline{\smash{$(\reachinf \Ass) \wedge \reachinf(\Ass \rightarrow \Gua)$:}}
is satisfied by the system behavior $\tau_{12}(w^\mathcal{I}) = y_{12}$. The 
guarantee $\Gua$ is lost, but at any point in time, it is still possible that 
$\Ass$ is satisfied (by visiting $s_{12}$ infinitely often) and that $\Ass$ is 
violated (by only visiting $s_{16}$ from some point on).  This is slightly 
better than $\tau_{10}$ (which satisfies $\reachinf \Ass$ alone) because the 
correctness property $\Ass \rightarrow \Gua$ is reachable at any point.  With 
$\tau_{10}$, it may happen that the correctness property is lost once and for 
all.

\underline{\smash{$(\reachinf\Ass) \wedge (\reachinf\Gua)$}}
and
\underline{\smash{$\reachinf (\Ass \wedge \Gua)$}:}  
The system behavior $\tau_{11}(w^\mathcal{I}) = y_{11}$ satisfies 
$(\reachinf\Ass) \wedge (\reachinf\Gua)$ as well as $\reachinf (\Ass \wedge 
\Gua)$.  In our example, there is no way to distinguish these two levels because 
we chose objectives defined by visiting some states infinitely often.  For other 
specification classes, there can be a difference, though.  $\Ass$ is not 
enforced by $\tau_{11}$ because the environment could always stay in $q_{15}$. 
$\Ass \rightarrow \Gua$ is not enforced either because the environment could 
stick to $q_{11}$.  Still, this is better than a prospect of reaching only 
$\Ass$ or only $\Gua$.

\section{Extended Cooperation Hierarchies}\label{sec:app:hie}

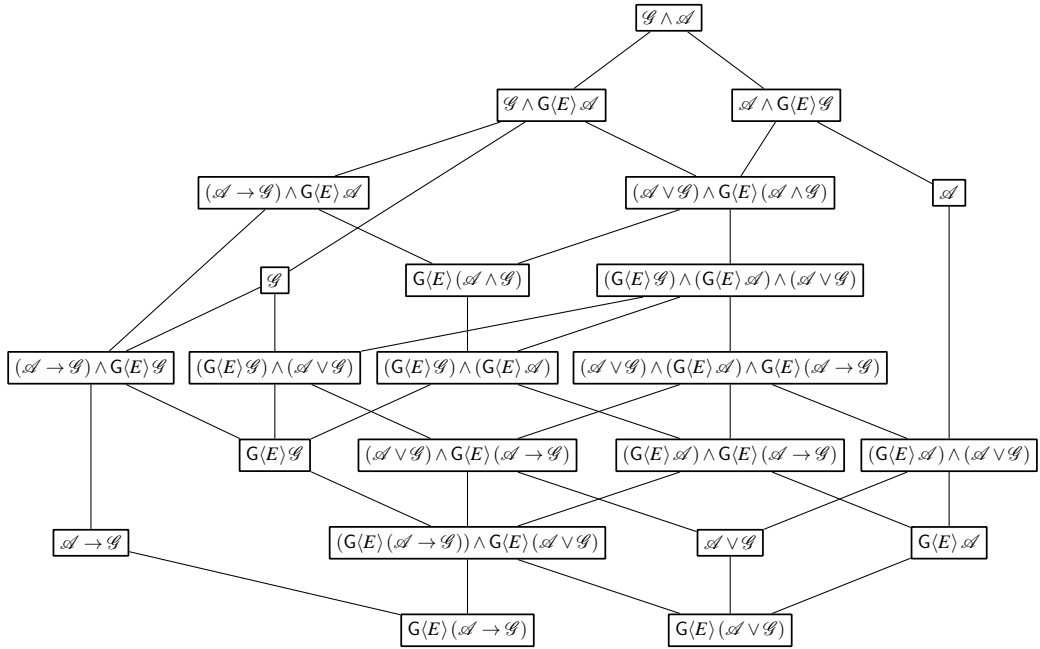
\begin{figure}[tb]
\centering
\scalebox{0.66}{
\begin{tikzpicture}[>=latex',line join=bevel,]
  \pgfsetlinewidth{1bp}
\pgfsetcolor{black}
\begin{scope}
  \definecolor{strokecol}{rgb}{0.0,0.0,0.0};
  \pgfsetstrokecolor{strokecol}
  \draw (405.0bp,50.0bp) node[rectangle,draw] (S0) {$\Ass \vee \Gua$};
\end{scope}
\begin{scope}
  \definecolor{strokecol}{rgb}{0.0,0.0,0.0};
  \pgfsetstrokecolor{strokecol}
  \draw (145.0bp,100.0bp) node[rectangle,draw] (S1) {$\reachinf \Gua$};
\end{scope}
\begin{scope}
  \definecolor{strokecol}{rgb}{0.0,0.0,0.0};
  \pgfsetstrokecolor{strokecol}
  \draw (530.0bp,50.0bp) node[rectangle,draw] (S2) {$\reachinf \Ass$};
\end{scope}
\begin{scope}
  \definecolor{strokecol}{rgb}{0.0,0.0,0.0};
  \pgfsetstrokecolor{strokecol}
  \draw (145.0bp,150.0bp) node[rectangle,draw] (S3) {$(\reachinf \Gua) \wedge (\Ass \vee \Gua)$};
\end{scope}
\begin{scope}
  \definecolor{strokecol}{rgb}{0.0,0.0,0.0};
  \pgfsetstrokecolor{strokecol}
  \draw (255.0bp,150.0bp) node[rectangle,draw] (S4) {$(\reachinf \Gua) \wedge (\reachinf \Ass)$};
\end{scope}
\begin{scope}
  \definecolor{strokecol}{rgb}{0.0,0.0,0.0};
  \pgfsetstrokecolor{strokecol}
  \draw (370.0bp,350.0bp) node[rectangle,draw] (S5) {$\Gua \wedge \Ass$};
\end{scope}
\begin{scope}
  \definecolor{strokecol}{rgb}{0.0,0.0,0.0};
  \pgfsetstrokecolor{strokecol}
  \draw (530.0bp,100.0bp) node[rectangle,draw] (S6) {$(\reachinf \Ass) \wedge (\Ass \vee \Gua)$};
\end{scope}
\begin{scope}
  \definecolor{strokecol}{rgb}{0.0,0.0,0.0};
  \pgfsetstrokecolor{strokecol}
  \draw (255.0bp,00.0bp) node[rectangle,draw] (S7) {$\reachinf (\Ass \rightarrow \Gua)$};
\end{scope}
\begin{scope}
  \definecolor{strokecol}{rgb}{0.0,0.0,0.0};
  \pgfsetstrokecolor{strokecol}
  \draw (150.0bp,250.0bp) node[rectangle,draw] (S8) {$(\Ass \rightarrow \Gua) \wedge \reachinf \Ass$};
\end{scope}
\begin{scope}
  \definecolor{strokecol}{rgb}{0.0,0.0,0.0};
  \pgfsetstrokecolor{strokecol}
  \draw (40.0bp,150.0bp) node[rectangle,draw] (S9) {$(\Ass \rightarrow \Gua) \wedge \reachinf \Gua$};
\end{scope}
\begin{scope}
  \definecolor{strokecol}{rgb}{0.0,0.0,0.0};
  \pgfsetstrokecolor{strokecol}
  \draw (530.0bp,250.0bp) node[rectangle,draw] (S10) {$\Ass$};
\end{scope}
\begin{scope}
  \definecolor{strokecol}{rgb}{0.0,0.0,0.0};
  \pgfsetstrokecolor{strokecol}
  \draw (303.0bp,300.0bp) node[rectangle,draw] (S11) {$\Gua \wedge \reachinf \Ass$};
\end{scope}
\begin{scope}
  \definecolor{strokecol}{rgb}{0.0,0.0,0.0};
  \pgfsetstrokecolor{strokecol}
  \draw (145.0bp,200.0bp) node[rectangle,draw] (S12) {$\Gua$};
\end{scope}
\begin{scope}
  \definecolor{strokecol}{rgb}{0.0,0.0,0.0};
  \pgfsetstrokecolor{strokecol}
  \draw (255.0bp,50.0bp) node[rectangle,draw] (S13) {$(\reachinf (\Ass \rightarrow \Gua)) \wedge \reachinf (\Ass \vee \Gua)$};
\end{scope}
\begin{scope}
  \definecolor{strokecol}{rgb}{0.0,0.0,0.0};
  \pgfsetstrokecolor{strokecol}
  \draw (255.0bp,200.0bp) node[rectangle,draw] (S14) {$\reachinf (\Ass \wedge \Gua)$};
\end{scope}
\begin{scope}
  \definecolor{strokecol}{rgb}{0.0,0.0,0.0};
  \pgfsetstrokecolor{strokecol}
  \draw (437.0bp,300.0bp) node[rectangle,draw] (S16) {$ \Ass \wedge \reachinf \Gua$};
\end{scope}
\begin{scope}
  \definecolor{strokecol}{rgb}{0.0,0.0,0.0};
  \pgfsetstrokecolor{strokecol}
  \draw (405.0bp,100.0bp) node[rectangle,draw] (S17)  {$(\reachinf \Ass) \wedge \reachinf (\Ass \rightarrow \Gua) $};
\end{scope}
\begin{scope}
  \definecolor{strokecol}{rgb}{0.0,0.0,0.0};
  \pgfsetstrokecolor{strokecol}
  \draw (40.0bp,50.0bp) node[rectangle,draw] (S18) {$\Ass \rightarrow \Gua$};
\end{scope}
\begin{scope}
  \definecolor{strokecol}{rgb}{0.0,0.0,0.0};
  \pgfsetstrokecolor{strokecol}
  \draw (405.0bp,200.0bp) node[rectangle,draw] (S19) {$(\reachinf \Gua) \wedge (\reachinf \Ass) \wedge (\Ass \vee \Gua)$};
\end{scope}
\begin{scope}
  \definecolor{strokecol}{rgb}{0.0,0.0,0.0};
  \pgfsetstrokecolor{strokecol}
  \draw (405.0bp,00.0bp) node[rectangle,draw] (S20) {$\reachinf (\Ass \vee \Gua)$};
\end{scope}
\begin{scope}
  \definecolor{strokecol}{rgb}{0.0,0.0,0.0};
  \pgfsetstrokecolor{strokecol}
  \draw (255.0bp,100.0bp) node[rectangle,draw] (S21) {$(\Ass \vee \Gua) \wedge \reachinf (\Ass \rightarrow \Gua)$};
\end{scope}
\begin{scope}
  \definecolor{strokecol}{rgb}{0.0,0.0,0.0};
  \pgfsetstrokecolor{strokecol}
  \draw (405.0bp,250.0bp) node[rectangle,draw] (S22) {$(\Ass \vee \Gua) \wedge \reachinf (\Ass \wedge \Gua)$};
\end{scope}
\begin{scope}
  \definecolor{strokecol}{rgb}{0.0,0.0,0.0};
  \pgfsetstrokecolor{strokecol}
  \draw (405.0bp,150.0bp) node[rectangle,draw] (S23) {$(\Ass \vee \Gua) \wedge (\reachinf \Ass) \wedge \reachinf (\Ass \rightarrow \Gua)$};
\end{scope}

\path
(S5) edge (S11)
(S5) edge (S16)
(S11) edge (S8)
(S11) edge (S12)
(S11) edge (S22)
(S16) edge (S22)
(S16) edge (S10)
(S22) edge (S14)
(S22) edge (S19)
(S8) edge (S9)
(S8) edge (S14)
(S10) edge (S6)
(S12) edge (S9)
(S12) edge (S3)
(S14) edge (S4)
(S19) edge (S3)
(S19) edge (S4)
(S19) edge (S23)
(S9) edge (S1)
(S3) edge (S1)
(S3) edge (S21)
(S4) edge (S1)
(S4) edge (S17)
(S23) edge (S21)
(S23) edge (S17)
(S23) edge (S6)
(S9) edge (S18)
(S1) edge (S13)
(S21) edge (S13)
(S21) edge (S0)
(S17) edge (S13)
(S17) edge (S2)
(S6) edge (S0)
(S6) edge (S2)
(S18) edge (S7)
(S13) edge (S7)
(S13) edge (S20)
(S0) edge (S20)
(S2) edge (S20)
;

\end{tikzpicture}
\caption{A refined hierarchy of cooperation levels.}
\label{fig:hie_or}
\end{figure}

As discussed in Section~\ref{sec:alt}, we can extend our hierarchy in various
ways.  When extending the conjuncts from 
Eqn.~\ref{eqn:cooperationLevelConjunctsConsidered} with $\Ass \vee \Gua$ and 
$\reachinf(\Ass \vee \Gua)$, we obtain the hierarchy shown in 
Figure~\ref{fig:hie_or}, which has $23$ levels.  We applied the following 
reduction rules in addition to those mentioned in Section~\ref{sec:defcop}:
\begin{itemize}
\item $\Gua$ implies $\Ass \vee \Gua$,
\item $\Ass$ implies $\Ass \vee \Gua$,
\item $\reachinf \Gua$ implies $\reachinf (\Ass \vee \Gua)$,
\item $\reachinf \Ass$ implies $\reachinf (\Ass \vee \Gua)$, 
\item $\Ass \rightarrow \Gua$ and $\Ass \vee \Gua$ together imply $\Gua$, and
\item $\Ass \rightarrow \Gua$ and $\reachinf (\Ass \vee \Gua)$ together imply $\reachinf \Gua$.
\end{itemize}

When further extending the conjuncts from 
Eqn.~\ref{eqn:cooperationLevelConjunctsConsidered} with $\reach(\Gua)$, 
$\reach(\Ass)$, $\reach(\Ass \wedge \Gua)$, $\reach(\Ass \rightarrow \Gua)$ and $\reach(\Ass \vee \Gua)$ we 
obtain a hierarchy of $77$ cooperation levels using the following additional
reduction rules:
\begin{itemize}
\item $\reachinf \varphi$ implies $\reach \varphi$ for all $\varphi \in \{\Gua, \Ass, \Ass \wedge \Gua, \Ass \rightarrow \Gua, \Ass \vee \Gua\}$
\item $\reach\mathcal{G}$ implies $\reach{(\mathcal{A} \rightarrow \mathcal{G})}$,
\item $\reach(\mathcal{A} \wedge \mathcal{G})$ implies $\reach\mathcal{A}$ and $\reachinf\mathcal{G}$,
\item $\mathcal{A} \rightarrow \mathcal{G}$ and $\reach \mathcal{A}$ together imply $\reach(\mathcal{A} \wedge \mathcal{G})$,
\item $\mathcal{A}$ and $\reach \mathcal{G}$ together imply $\reach(\mathcal{A} \wedge \mathcal{G})$,
\item $\reach(\mathcal{A} \rightarrow \mathcal{G})$ and $\mathcal{A}$ together imply $\reach(\mathcal{G})$.
\item $\reach \Gua$ implies $\reach (\Ass \vee \Gua)$,
\item $\reach \Ass$ implies $\reach (\Ass \vee \Gua)$,
\item $\Ass \rightarrow \Gua$ and $\reach (\Ass \vee \Gua)$ together imply $\reach \Gua$.
\end{itemize}
The first six reduction rules directly correspond to the rules from 
Section~\ref{sec:defcop} but using $\reach$ instead of $\reachinf$.  The latter 
three rules correspond to rules from the previous paragraph, again instantiated 
with $\reachinf$ instead of $\reach$.

}{} 

\end{document}